\title{Invariant-Based Cryptography}
\author{Stanislav Semenov \\
\href{mailto:stas.semenov@gmail.com}{stas.semenov@gmail.com} \\
\href{https://orcid.org/0000-0002-5891-8119}{ORCID: 0000-0002-5891-8119}}
\date{May 8, 2025}
\theoremstyle{definition}
\newtheorem{definition}{Definition}[section]
\theoremstyle{plain}
\newtheorem{theorem}[definition]{Theorem}
\newtheorem{lemma}[definition]{Lemma}
\theoremstyle{remark}
\newtheorem*{remark}{Remark}
\begin{document}

\maketitle

\begin{abstract}
We propose a new symmetric cryptographic scheme based on functional invariants defined over discrete oscillatory functions with hidden parameters. The scheme encodes a secret integer through a four-point algebraic identity preserved under controlled parameterization. Security arises not from algebraic inversion but from structural coherence: the transmitted values satisfy an invariant that is computationally hard to forge or invert without knowledge of the shared secret. We develop the full analytic and modular framework, prove exact identities, define index-recovery procedures, and analyze security assumptions, including oscillator construction, hash binding, and invertibility conditions. The result is a compact, self-verifying mechanism suitable for secure authentication, parameter exchange, and lightweight communication protocols.
\end{abstract}

\subsection*{Mathematics Subject Classification}
03F60 (Constructive and recursive analysis), 94A60 (Cryptography)

\subsection*{ACM Classification}
F.4.1 Mathematical Logic, E.3 Data Encryption

\section*{Introduction}

\subsection*{Toward a New Cryptographic Principle}

Most of modern cryptography is grounded in algebraic intractability: the presumed hardness of solving problems in number-theoretic or lattice-based structures. These include discrete logarithms, integer factorization, and short vector search~\cite{goldreich2001foundations, latticesurvey}. While these systems are powerful, they share a key constraint: their security relies on the internal difficulty of algebraic inversion, and their constructions tend to be rigid, externally opaque but structurally transparent.

In this work, we propose a new direction—\emph{cryptography based on functional invariants}. This approach focuses not on hiding elements within algebraic groups, but on preserving stable identities across observable data points. In other words, we encode structure not through algebraic concealment, but through \emph{analytical coherence}. The central object is an invariant: a deterministic relation that holds across a small number of function values and resists deformation unless the underlying generative rule is preserved.

\subsection*{Invariant-Based Cryptography}

Invariant-based cryptography (IBC) introduces a framework in which:
\begin{itemize}
    \item Security stems from the internal consistency of a function whose values are linked by an exact relation—an invariant—known only through derived data;
    \item Authenticity and integrity are ensured by the ability to reconstruct one value from the others using this invariant, without exposing any direct secret;
    \item Flexibility is achieved by allowing the function’s shape and internal parameters to vary pseudorandomly across sessions, yielding distinct but structurally verifiable patterns.
\end{itemize}

Instead of encrypting data or signing messages in the traditional sense, we embed information in the \emph{geometry of its transformation}. Any attempt to alter a value in the sequence will break the invariant, and any observer without structural knowledge will find the values indistinguishable from noise.

The functional invariant behaves like a hidden contract between values—compact, irreversible, and self-validating.

This work builds upon the recent introduction of deterministic invariant structures in small functional tuples~\cite{semenov2025invariant}.

\subsection*{Scope and Motivation}

The construction presented in this work is a minimal instance of IBC: it uses only a single scalar function, evaluated at a small number of points, to encode structure that is both recoverable and resistant to attack. No public-key infrastructure is required. There is no need for modular inversion, elliptic curves, or lattice trapdoors. The mechanism relies only on:
\begin{itemize}
    \item controlled exponential growth;
    \item anti-symmetric pseudo-random oscillations;
    \item a fixed numerical identity binding four consecutive evaluations.
\end{itemize}

The scheme offers a compact and efficient tool for secure data transmission, mutual authentication, and controlled data reconstruction. Its core is structurally elementary but cryptographically nontrivial.

\subsection*{Research Context and Outlook}

This work initiates a broader research direction: the study of \emph{functional invariants as cryptographic primitives}. It raises natural questions:
\begin{itemize}
    \item What classes of functions admit invariant structures suitable for cryptographic use?
    \item How do such systems behave under composition, noise, or approximation?
    \item Can invariants yield post-quantum or complexity-theoretic guarantees comparable to standard assumptions?
\end{itemize}

More generally, we suggest that invariants offer a new axis of design: not only what is secret, but what is structurally \emph{preserved}—and how such preservation can serve as proof.

\section{Foundations of Invariant-Based Cryptography}

\subsection{Motivation for a New Primitive}

Modern cryptography typically relies on algebraic asymmetry: the presumed difficulty of reversing number-theoretic operations such as discrete exponentiation, factoring, or lattice-based transformations. These assumptions underpin most symmetric and public-key primitives. Yet all of them encode security through algebraic opacity—by hiding internal structure from external view.

This work proposes a distinct paradigm: \emph{cryptography based on invariant structure}. Instead of relying on algebraic trapdoors, we construct systems where certain function values satisfy an exact identity that remains invisible without hidden coordinates. The goal is to encode a secret not through concealment, but through \emph{preservation}: a functional contract that remains stable only under legitimate derivation.

At the heart of this approach is a deterministic relation—an \emph{invariant}—that links several values of a function in a rigid algebraic identity. The identity cannot be validated or reconstructed without knowledge of internal alignment parameters, such as a hidden evaluation index. This framework offers a structurally minimal, analytically rich alternative to traditional hardness assumptions.

\subsection{Defining the Invariant Primitive}

We define an \emph{invariant primitive} as a 4-tuple of function evaluations \( (s_0, s_1, s_2, s_3) \), where each \( s_i := s(t_i) \) for distinct points \( t_i \in \mathbb{Q} \), such that the following identity holds:
\[
I(s_0, s_1, s_2, s_3) = \mathrm{const}(p).
\]
Here:
\begin{itemize}
    \item \( s(t) \) is a structured function with pseudorandom components and exponential terms;
    \item \( t \in \mathbb{Q} \setminus \{0\} \) is a secret evaluation index;
    \item The identity \( I \) is exact and algebraically rigid: any three of the values determine the fourth;
    \item The constant \( \mathrm{const}(p) \) depends on a base parameter \( p \), itself deterministically derived from shared context (e.g., \( p = H(S, z) \bmod M \)).
\end{itemize}

The function \( s(t) \) is constructed so that its values resemble pseudorandom noise unless the generator parameters (such as the index \( t \), the base \( p \), and oscillatory seeds) are known. The observable outputs are thus externally structureless, yet internally constrained.

\subsection{Target Properties of the Primitive}

We treat the invariant relation as a cryptographic object in its own right. A viable invariant primitive should satisfy:

\begin{itemize}
    \item \textbf{Reconstructability:} Any three of the values \( s_i \) uniquely determine the fourth via the invariant;
    \item \textbf{Non-invertibility:} Without knowledge of the secret index \( t \), predicting values such as \( s(t + \delta) \) is computationally hard;
    \item \textbf{Verifiability:} A party with partial values and a known identity constant can detect tampering or forgery;
    \item \textbf{Session uniqueness:} Parameters such as \( p \), oscillator seeds, and grid steps are tied to a nonce \( z \), preventing cross-session correlation;
    \item \textbf{Analytic indistinguishability:} The function \( s(t) \) resists approximation or modeling due to embedded pseudo-random antiperiodic oscillations.
\end{itemize}

These properties mimic classical cryptographic guarantees—secrecy, integrity, binding—yet arise from functional coherence rather than group structure.

\subsection{Hardness Assumption: Invariant Index-Hiding Problem}

We introduce a general version of the \emph{Invariant Index-Hiding Problem (IIHP)} as the foundational hardness assumption for a class of cryptographic schemes based on functional invariants. A concrete instantiation will be presented after the full protocol structure is defined.

\medskip
\fbox{%
\parbox{0.94\linewidth}{%
\textbf{Invariant Index-Hiding Problem (IIHP):}  
Let \( s \colon \mathbb{Q} \to \mathbb{Z}_M \) be a function generated using a hidden rational index \( t = \frac{i}{K} \in \mathbb{Q} \setminus \{0\} \), internal session parameters, and a known invariant relation:
\[
I\big(s(t),\ s(t + \Delta_1),\ s(t + \Delta_2),\ s(t + \Delta_3)\big) = \mathrm{Const},
\]
for some fixed offsets \( \Delta_j \in \mathbb{Z} \) and an invariant function \( I \) defined over 4 inputs in \( \mathbb{Z}_M \).

The adversary is given a finite subset of values \( \{ s(t + \Delta_j) \} \) and associated public metadata, sufficient to verify the invariant identity.

The adversary's goal is to:
\begin{itemize}
  \item Forge a new value \( s^* = s(t + \delta^*) \) for some offset \( \delta^* \notin \{ \Delta_1, \Delta_3 \} \),
  \item Such that the invariant identity remains valid over a modified 4-tuple including \( s^* \),
  \item And the forgery passes a cryptographic verification step (e.g., via a hash-based check).
\end{itemize}

The IIHP is said to be \emph{hard} if no probabilistic polynomial-time adversary can succeed in this task with non-negligible probability in the security parameter.
}%
}
\medskip

This definition captures the essential structure of invariant-based protection: a secret evaluation index masked by functional and algebraic complexity, and a fixed relation that acts as both a constraint and a verification mechanism. Concrete realizations will instantiate the function \( s(t) \), the invariant \( I \), and the verification logic in more detail.

\subsection{Theoretical Directions}

To support the development of invariant-based primitives, we propose the following theoretical questions:

\begin{enumerate}
    \item \textbf{Classification of useful invariants:} What kinds of functional identities yield cryptographic asymmetry? How many points are required for minimal security?
    
    \item \textbf{Security models:} Can the classical notions of indistinguishability (IND), unforgeability (EUF), and non-malleability be rephrased in terms of invariant coherence?

    \item \textbf{Hardness from approximation theory:} Are index-recovery or extension problems reducible to known hard tasks in transcendental number theory or analytic continuation~\cite{titchmarsh1986theory}?

    \item \textbf{Algebraic independence and entropy:} Can values \( s(t + i) \) be approximated by polynomials or rational functions under bounded entropy assumptions?

    \item \textbf{Composable structures:} Can invariant relations be embedded into larger protocols—e.g., zero-knowledge proofs, commitments, or symmetric key agreements?
\end{enumerate}

These questions invite the exploration of a new class of symbolic-hardness assumptions rooted not in algebraic concealment, but in the analytic rigidity of generated values.

\subsection{Outlook}

The invariant-based model offers an alternative route to cryptographic security—one based not on one-way functions or trapdoors, but on the \emph{impossibility of structurally consistent extension}. This model invites post-quantum constructions, lightweight designs, and verification mechanisms that rely on functional symmetry rather than reversible arithmetic.

As with all new primitives, formal reductions and models remain to be developed. Yet the central insight remains compelling: \emph{what is preserved can also be protected}.

\section{Analytic Four-Point Invariant on the Real Line}

To initiate the construction of invariant-based cryptography, we begin with the analytic form of a specific four-point identity. This section is devoted to defining the invariant, analyzing its structure, and explaining the properties that make it suitable as a cryptographic primitive.

While the final cryptographic protocols will be implemented over finite fields or modular rings, the real-valued construction provides clarity, analytical tractability, and a natural route toward generalization. Working over \( \mathbb{R} \) allows us to isolate the core functional behavior before introducing discretization~\cite{rudin1987real}, rounding, or modular reduction. This also provides a clean setting for proofs and algebraic decomposition.

\subsection{The Four-Point Invariant}

Let \( s(t) \) be a real-valued function defined on an open domain of the real line. The central invariant relation we study is:

\[
\frac{s(t) \cdot t + s(t+1) \cdot (t+1)}{s(t+2) \cdot (t+2) + s(t+3) \cdot (t+3)} = \frac{1}{p^2}
\]

This equation relates the values of the function \( s(t) \) at four consecutive positions, using weighted linear coefficients derived from the evaluation points themselves. Crucially, the relation holds exactly, under specific functional assumptions.

\subsection{The Generating Function}

The function \( s(t) \) is defined as follows:

\[
s(t) = \frac{p^t + q_1 \sin(r_1 \pi t) + q_2 \cos(r_2 \pi t)}{t}
\]

where:
\begin{itemize}
    \item \( p > 0 \) is a real (or complex) base parameter, typically close to \( 1 \);
    \item \( q_1, q_2 \in \mathbb{R} \) are amplitude coefficients;
    \item \( r_1, r_2 \in \mathbb{Z}_{\text{odd}} \) are frequency multipliers;
    \item \( t \in \mathbb{R} \setminus \{0\} \) is the evaluation point.
\end{itemize}

This function combines smooth exponential growth (or decay) with bounded oscillations from sine and cosine terms~\cite{berry1978semiclassical}. The denominator \( t \) serves to normalize the amplitude and introduce asymptotic decay near the origin. Despite its simple appearance, this structure admits a precise four-point identity.

\subsection{Structure of the Invariant}

The key property is that when \( s(t) \) is defined as above, the four-point ratio simplifies:

\[
\frac{s(t) \cdot t + s(t+1) \cdot (t+1)}{s(t+2) \cdot (t+2) + s(t+3) \cdot (t+3)} = \frac{p^t + p^{t+1}}{p^{t+2} + p^{t+3}} = \frac{1}{p^2}
\]

due to cancellation of the sinusoidal components under specific parity conditions on \( r_1 \) and \( r_2 \), and algebraic factoring of the exponential terms. This identity is exact and invariant under a wide range of parameter values.

The proof and full structural analysis of this identity have been developed in prior work~\cite{semenov2025invariant}, where the derivation is presented in full detail. That study also includes numerical stability analysis and generalizations to variable spacing.

\section{Transition to the Discrete Model}

To bridge the gap between analytic structures and practical cryptographic implementations, we now transition from real-valued continuous functions to discrete algebraic constructions~\cite{stoer2013introduction}. Cryptographic schemes are typically realized over finite rings, integer lattices, or rational domains. Consequently, we replace the smooth domain \( \mathbb{R} \) with a discretized rational setting, where both evaluation points and function values belong to computationally representable sets.

The invariant introduced in the continuous model remains structurally preserved under discretization. By maintaining its algebraic form, we enable both efficient computation and mathematically grounded cryptographic applications.

\subsection{Discretization Grid and Parameters}

Let \( K \in \mathbb{N} \) be a discretization parameter, typically chosen as a power of two. We define a rational grid:

\[
t := \frac{i}{K}, \quad i \in \mathbb{Z} \setminus \{0\}.
\]

Function evaluations are restricted to these grid points. All oscillatory components and coefficients are accordingly adapted to preserve algebraic structure on this domain.

We introduce a period scaling parameter \( C \in \mathbb{N} \), which determines the frequency of oscillations in the discrete setting. Define two functions \( \varphi, \psi : \mathbb{Q} \to \mathbb{Z} \), evaluated only on the subgrid:

\[
\left\{ Ct : t = \frac{i}{K},\ i \in \mathbb{Z} \setminus \{0\} \right\}.
\]

These functions are defined to be antiperiodic with period \( C \):

\[
\varphi(t + C) = -\varphi(t), \quad \psi(t + C) = -\psi(t).
\]

\subsection{Oscillatory Functions}

The discrete functions \( \varphi \) and \( \psi \) serve as bounded oscillators analogous to \( \sin \) and \( \cos \) in the continuous model. Their antiperiodicity ensures that their contributions cancel out over symmetric spans of four consecutive points \( \{t, t+1, t+2, t+3\} \), which is crucial for the invariant structure discussed below.

\subsection{Discrete Generating Function}

We define the discrete generating function:

\[
s_d(t) = \frac{p^t + q_1 \cdot \varphi(Ct) + q_2 \cdot \psi(Ct)}{t},
\]

where:
\begin{itemize}
  \item \( p \in \mathbb{Q}_{> 0} \) is the base parameter;
  \item \( q_1, q_2 \in \mathbb{Q} \) are amplitude coefficients;
  \item \( \varphi, \psi : \mathbb{Q} \to \mathbb{Z} \) are discrete antiperiodic functions;
  \item \( t = \frac{i}{K} \in \mathbb{Q} \setminus \{0\} \), as above.
\end{itemize}

This function retains the essential structure of the continuous model: an exponential term modulated by bounded oscillations and normalized by the evaluation index.

\subsection{Discrete Invariant Identity}

We now establish the discrete four-point invariant. Consider the expression:

\[
\frac{s_d(t)\cdot t + s_d(t+1)\cdot(t+1)}{s_d(t+2)\cdot(t+2) + s_d(t+3)\cdot(t+3)}.
\]

Substituting \( s_d \), the numerator becomes:

\[
\begin{aligned}
p^t + p^{t+1} 
+ q_1 \left[ \varphi(Ct) + \varphi(Ct + C) \right] 
+ q_2 \left[ \psi(Ct) + \psi(Ct + C) \right].
\end{aligned}
\]

The denominator is similarly:

\[
\begin{aligned}
p^{t+2} + p^{t+3} 
+ q_1\left[\varphi(Ct + 2C) + \varphi(Ct + 3C)\right] 
+ q_2\left[\psi(Ct + 2C) + \psi(Ct + 3C)\right].
\end{aligned}
\]

By antiperiodicity:

\[
\varphi(Ct + C) = -\varphi(Ct), \quad \varphi(Ct + 3C) = -\varphi(Ct + 2C),
\]

and similarly for \( \psi \). Therefore:

\[
\varphi(Ct) + \varphi(Ct + C) = 0, \quad \varphi(Ct + 2C) + \varphi(Ct + 3C) = 0,
\]

\[
\psi(Ct) + \psi(Ct + C) = 0, \quad \psi(Ct + 2C) + \psi(Ct + 3C) = 0.
\]

The oscillatory components cancel exactly, leaving:

\[
\frac{p^t + p^{t+1}}{p^{t+2} + p^{t+3}}.
\]

This simplifies as:

\[
= \frac{p^t(1 + p)}{p^t(p^2 + p^3)} = \frac{1 + p}{p^2(1 + p)} = \frac{1}{p^2}.
\]

Hence, the discrete invariant identity holds:

\[
\frac{s_d(t)\cdot t + s_d(t+1)\cdot(t+1)}{s_d(t+2)\cdot(t+2) + s_d(t+3)\cdot(t+3)} = \frac{1}{p^2}.
\]

This confirms that the algebraic structure of the invariant is preserved exactly in the discrete setting. The exponential component determines the value, while the oscillatory terms cancel precisely due to their antiperiodicity and alignment with the scaled grid.

\subsection{Generalized Discrete Invariant with Aligned Indices}

We refine the generalized discrete invariant by adjusting the final index to preserve exponential symmetry. Define:

\begin{align*}
s_0 &= s_d(t), \\
s_1 &= s_d(t + 2v + 1), \\
s_2 &= s_d(t + 2u), \\
s_3 &= s_d(t + 2u + 2v + 1).
\end{align*}

The expression under consideration is:
\[
\frac{s_0 \cdot t + s_1 \cdot (t + 2v + 1)}{s_2 \cdot (t + 2u) + s_3 \cdot (t + 2u + 2v + 1)}.
\]

Substituting the definition of \( s_d \), the numerator becomes:
\[
\begin{aligned}
p^t + p^{t + 2v + 1} + q_1 \left[ \varphi(Ct) + \varphi(Ct + C(2v + 1)) \right] + q_2 \left[ \psi(Ct) + \psi(Ct + C(2v + 1)) \right].
\end{aligned}
\]

The denominator becomes:
\[
\begin{aligned}
&p^{t + 2u} + p^{t + 2u + 2v + 1} + q_1\left[\varphi(Ct + 2uC) + \varphi(Ct + C(2u + 2v + 1))\right] \\
&\quad + q_2\left[\psi(Ct + 2uC) + \psi(Ct + C(2u + 2v + 1))\right].
\end{aligned}
\]

Using antiperiodicity:
\[
\varphi(Ct + C(2v + 1)) = -\varphi(Ct), \quad \varphi(Ct + C(2u + 2v + 1)) = -\varphi(Ct + 2uC),
\]
\[
\psi(Ct + C(2v + 1)) = -\psi(Ct), \quad \psi(Ct + C(2u + 2v + 1)) = -\psi(Ct + 2uC),
\]

we obtain cancellation:
\[
\varphi(Ct) + \varphi(Ct + C(2v + 1)) = 0, \quad
\varphi(Ct + 2uC) + \varphi(Ct + C(2u + 2v + 1)) = 0,
\]
\[
\psi(Ct) + \psi(Ct + C(2v + 1)) = 0, \quad
\psi(Ct + 2uC) + \psi(Ct + C(2u + 2v + 1)) = 0.
\]

Thus, the invariant reduces to:
\[
\frac{p^t + p^{t + 2v + 1}}{p^{t + 2u} + p^{t + 2u + 2v + 1}} = \frac{p^t(1 + p^{2v + 1})}{p^t \cdot p^{2u}(1 + p^{2v + 1})} = \frac{1}{p^{2u}}.
\]

This cleanly matches the original invariant in the base case \( v = 0, u = 1 \), where:
\[
\frac{1}{p^2}
\]
is recovered. Therefore, the generalized discrete invariant holds:
\[
\frac{s_0 \cdot t + s_1 \cdot (t + 2v + 1)}{s_2 \cdot (t + 2u) + s_3 \cdot (t + 2u + 2v + 1)} = \frac{1}{p^{2u}}.
\]

\subsection{Construction of Antiperiodic Pseudorandom Oscillators on a Rational Grid}

We now define the oscillatory components \( \varphi(t) \) and \( \psi(t) \) used in the discrete generating function. These functions play a role analogous to \( \sin(t) \) and \( \cos(t) \) in the continuous model, but are represented as discrete pseudorandom sequences with controlled periodicity.

\subsubsection*{Grid and Period Setup}

Let \( K \in \mathbb{N} \) be the discretization parameter. We evaluate all functions at points of the form:
\[
t = \frac{i}{K}, \quad i \in \mathbb{Z} \setminus \{0\}.
\]

Let \( C \in \mathbb{N} \) be the base period length (measured in \( t \)-units). Then the number of sampling points in a single interval \( [0, C) \) is:
\[
P := K \cdot C.
\]

We define the function \( \varphi(t) \) at these \( P \) discrete points by choosing a finite pseudorandom seed~\cite{knuth1997art}:
\[
\varphi_0, \varphi_1, \dots, \varphi_{P - 1} \in \mathbb{Z}.
\]

These values correspond to the grid:
\[
t_i = \frac{i}{K}, \quad 0 \leq i < P.
\]

\subsubsection*{Antiperiodic Extension}

We now extend \( \varphi(t) \) to all \( t = \frac{i}{K} \in \mathbb{Q} \setminus \{0\} \) by defining its values piecewise:

\begin{itemize}
  \item On \( [0, C) \): define \( \varphi(t_i) := \varphi_i \) for \( i = 0, \dots, P - 1 \).
  \item On \( [C, 2C) \): define \( \varphi(t_{i+P}) := -\varphi_i \).
  \item Extend to all \( t \in \frac{1}{K} \cdot \mathbb{Z} \) by periodicity with period \( 2C \):
  \[
  \varphi(t + 2C) := \varphi(t).
  \]
\end{itemize}

This construction ensures that:
\[
\varphi(t + C) = -\varphi(t), \quad \text{for all } t = \frac{i}{K},
\]
i.e., the function is antiperiodic with period \( C \) on the rational grid.

\subsubsection*{Second Oscillator}

The function \( \psi(t) \) is constructed in the same way using an independent seed:
\[
\psi_0, \psi_1, \dots, \psi_{P - 1} \in \mathbb{Z}.
\]

\subsubsection*{Example: Antiperiodic Oscillator with $K = 4$ and $C = 2$}

Let the discretization parameter be \( K = 4 \), and let the antiperiod be \( C = 2 \). Then:
\[
P = K \cdot C = 8.
\]

We define the pseudorandom seed as:
\[
\varphi_0, \dots, \varphi_7 = (2,\ -1,\ 0,\ 3,\ -2,\ 1,\ 1,\ -3).
\]

These values correspond to the grid points:
\[
t_i = \frac{i}{4}, \quad i = 0, \dots, 7, \quad \text{so } t \in [0, 2) \text{ with step } \tfrac{1}{4}.
\]

The function is extended to the interval \( [2, 4) \) via antiperiodic reflection:
\[
\varphi\left(t_{i + 8}\right) := -\varphi_i, \quad i = 0, \dots, 7.
\]

Examples:
\[
\varphi\left(\frac{8}{4}\right) = -\varphi_0 = -2, \quad
\varphi\left(\frac{11}{4}\right) = -\varphi_3 = -3.
\]

The full function is defined on all \( t \in \frac{1}{4} \cdot \mathbb{Z} \) by periodicity:
\[
\varphi(t + 4) = \varphi(t), \quad \text{so } \varphi(t + 2) = -\varphi(t).
\]

This yields an integer-valued function \( \varphi : \frac{1}{4} \cdot \mathbb{Z} \to \mathbb{Z} \), fully determined by 8 initial values, with perfect antiperiodicity over intervals of length 2.

\section{Modular Formulation of the Generalized Discrete Invariant}

To enable efficient computation and alignment with modular cryptographic protocols, we now reformulate the generalized discrete invariant over finite rings. This modular reinterpretation allows the invariant structure to be preserved in bounded arithmetic, where all operations occur modulo a prime or a composite modulus. The key challenge is to ensure that both exponential and oscillatory components remain well-defined and behave predictably under modular reduction.

\subsection{Modular Domain and Function Definition}

Let \( M \in \mathbb{N} \) be the modulus, typically chosen as a large prime or a power of two. All arithmetic is performed in the ring \( \mathbb{Z}_M \). Assume:
\[
p \in \mathbb{Z}_M^{\times}, \quad q_1, q_2 \in \mathbb{Z}_M.
\]

We redefine the discrete generating function modulo \( M \) as:
\[
s_M(t) := \frac{p^t + q_1 \cdot \varphi(Ct) + q_2 \cdot \psi(Ct)}{t} \mod M,
\]
for all \( t = \frac{i}{K} \in \mathbb{Q} \setminus \{0\} \) such that \( t^{-1} \in \mathbb{Z}_M \). The functions \( \varphi, \psi : \mathbb{Q} \to \mathbb{Z}_M \) are now treated as integer-valued functions followed by modular reduction.

\subsection{\texorpdfstring{Antiperiodicity Modulo \( M \)}{Antiperiodicity Modulo M}}

The oscillatory functions \( \varphi, \psi \) retain their antiperiodic property modulo \( M \):
\[
\varphi(t + C) \equiv -\varphi(t) \mod M, \quad \psi(t + C) \equiv -\psi(t) \mod M.
\]

Since \( \varphi \) and \( \psi \) are generated from integer seeds and extended by integer reflection, their values remain in \( \mathbb{Z} \), and modular antiperiodicity holds automatically after reduction.

\subsection{Modular Invariant Identity}

We now state the modular version of the generalized invariant. Define:
\begin{align*}
s_0 &= s_M(t), \\
s_1 &= s_M(t + 2v + 1), \\
s_2 &= s_M(t + 2u), \\
s_3 &= s_M(t + 2u + 2v + 1).
\end{align*}

Then the modular invariant is:
\[
I_M(t; u, v) := \frac{s_0 \cdot t + s_1 \cdot (t + 2v + 1)}{s_2 \cdot (t + 2u) + s_3 \cdot (t + 2u + 2v + 1)} \mod M.
\]

As in the rational case, oscillatory terms cancel in numerator and denominator due to modular antiperiodicity:
\[
\varphi(Ct) + \varphi(Ct + C(2v+1)) \equiv 0 \mod M,
\]
\[
\psi(Ct + 2uC) + \psi(Ct + C(2u+2v+1)) \equiv 0 \mod M.
\]

This yields:
\[
I_M(t; u, v) \equiv \frac{p^t + p^{t + 2v + 1}}{p^{t + 2u} + p^{t + 2u + 2v + 1}} \mod M.
\]

Assuming \( p^{2u} \in \mathbb{Z}_M^{\times} \), this simplifies to:
\[
I_M(t; u, v) \equiv \frac{1 + p^{2v+1}}{p^{2u}(1 + p^{2v+1})} = \frac{1}{p^{2u}} \mod M.
\]

\subsection{Parameter Selection for Consistency and Invertibility}

To guarantee that the modular discrete invariant is mathematically well-defined and computationally valid, the parameters \( K, C, p, M \) must satisfy the following compatibility conditions:

\begin{itemize}
  \item \textbf{Discretization parameter \( K \):}  
  Determines the rational evaluation grid \( t = \frac{i}{K} \). To ensure that each \( t \) is invertible modulo \( M \), we require:
  \[
  \gcd(K, M) = 1 \quad \text{and} \quad \gcd(i, M) = 1 \quad \text{for all } i \in \mathbb{Z} \text{ used in computation}.
  \]

  \item \textbf{Oscillator period \( C \):}  
  A positive integer controlling the frequency of oscillations via arguments \( \varphi(Ct) \) and \( \psi(Ct) \). Since the grid is closed under integer scaling, it suffices that:
  \[
  C \in \mathbb{N}.
  \]
  For efficiency, powers of two are recommended.

  \item \textbf{Exponential base \( p \):}  
  The algebraic base used in the term \( p^t \). In order for the modular inverse \( p^{-2u} \mod M \) to exist, it must satisfy:
  \[
  \gcd(p, M) = 1.
  \]
  When \( M \) is a power of two, \( p \) must additionally be odd.

  \item \textbf{Modulus \( M \):}  
  Defines the ring \( \mathbb{Z}_M \) over which all arithmetic takes place. It is typically chosen as:
  \begin{itemize}
    \item A large prime \( M \), for full invertibility of all nonzero elements.
    \item A power of two \( M = 2^k \), in which case extra care must be taken:
    \[
    p \text{ and } K \text{ must be odd to ensure } \gcd(t, M) = 1.
    \]
  \end{itemize}
\end{itemize}

\subsection{Equivalence Classes of Invariant Pairs}

An important structural consequence of the modular invariant identity is the existence of multiple input pairs that yield the same invariant value. Specifically, for fixed parameters \( p, u, M \), and under consistent construction of the evaluation points \( t \), there exists a family of distinct index shifts \( v \in \mathbb{N} \) and corresponding output pairs \( (s_1, s_3) \in \mathbb{Z}_M^2 \) that all satisfy:
\[
I_M(t; u, v) = \frac{1}{p^{2u}} \mod M.
\]

This defines a modular equivalence class:
\[
(s_1, s_3) \sim (s_1', s_3') \quad \text{if} \quad I_M(s_1, s_3; t, u, v) = I_M(s_1', s_3'; t', u, v')
\]
where the underlying evaluation points differ but preserve the same structural relationship.

\paragraph{Cryptographic Implication.}  
The presence of many such equivalent pairs—typically on the order of \( M \) or more—prevents the verifier from learning anything about the original index \( t \), the shift \( v \), or the secret \( S \). From the outside, the values \( (s_1, s_3) \) appear pseudorandom, yet are guaranteed to belong to a well-defined invariant-preserving set.

\paragraph{Algebraic Interpretation.}  
Each invariant value \( c = \frac{1}{p^{2u}} \mod M \) defines a level set (or fiber) in \( \mathbb{Z}_M^2 \) consisting of all pairs \( (s_1, s_3) \) that satisfy the corresponding modular relation. This makes the invariant function \( I_M \) a non-injective algebraic map over \( \mathbb{Z}_M^2 \), enabling structured obfuscation while retaining local verifiability.

\paragraph{Use in Protocol Design.}  
This redundancy allows secure message structures to be instantiated with diverse random seeds while remaining verifiable against a common invariant key. It also enables the construction of chained or parallel message blocks that preserve a shared invariant value while hiding individual internal parameters.

\section{Evaluation Example: High-Offset Indices in the Modular Invariant}

We now present a detailed computation of one term of the modular discrete invariant using nontrivial index shifts. Specifically, we focus on the value of the oscillatory component \( \varphi(Ct) \) contributing to \( s_3 \), corresponding to a highly offset argument of the form \( t + 2u + 2v + 1 \).

\subsection{\texorpdfstring{Step-by-Step Evaluation of \( \varphi(Ct) \) for \( s_3 \)}{Evaluation of φ(Ct) for s₃}}

Let the parameters be:

\begin{itemize}
  \item Discretization: \( K = 4 \)
  \item Oscillator period: \( C = 2 \)
  \item Modulus: \( M = 257 \)
  \item Exponential base: \( p = 3 \in \mathbb{Z}_{257}^\times \)
  \item Oscillator seed: \( \varphi_0, \dots, \varphi_7 = (2,\ -1,\ 0,\ 3,\ -2,\ 1,\ 1,\ -3) \)
  \item Evaluation point: \( t = \frac{3}{4} \)
  \item Invariant parameters: \( u = 5, v = 17 \)
\end{itemize}

Then the target index for \( s_3 \) is:
\[
t_3 := t + 2u + 2v + 1 = \frac{3}{4} + 10 + 34 + 1 = \frac{3}{4} + 45 = \frac{183}{4}
\]

We now compute the argument of the oscillator:
\[
Ct_3 = C \cdot t_3 = 2 \cdot \frac{183}{4} = \frac{366}{4} = \frac{183}{2}
\]

This corresponds to the index:
\[
i = \frac{183}{2} \cdot K = \frac{183}{2} \cdot 4 = 366
\]

The oscillator period length is:
\[
P := K \cdot C = 4 \cdot 2 = 8
\]

We reduce the index modulo \( 2P = 16 \) to find the corresponding signed value from the base seed:
\[
366 \mod 16 = 14
\]

Now determine the sign and oscillator value:

\begin{itemize}
  \item Compute the period block index: \( \left\lfloor \frac{366}{8} \right\rfloor = 45 \)
  \item Since 45 is odd, apply antiperiodic reflection: \( \varphi_{366} = -\varphi_{366 \bmod 8} \)
  \item Compute position in the base seed: \( 366 \bmod 8 = 6 \)
  \item Thus: \( \varphi\left(\frac{183}{2}\right) = -\varphi_6 = -1 \)
\end{itemize}

\paragraph{Conclusion.} The contribution of the oscillator to \( s_3 \) is:
\[
\varphi(C(t + 2u + 2v + 1)) = -1
\]

This example illustrates how oscillator values can be efficiently retrieved from a short integer seed via modular arithmetic and parity-based antiperiodic extension, even when the index offset is large.

\subsection{\texorpdfstring{Handling of \( p^t \)}{Handling of p raised to t}}

The exponential term \( p^t \) in the discrete generating function
\[
s_M(t) := \frac{p^t + q_1 \cdot \varphi(Ct) + q_2 \cdot \psi(Ct)}{t} \mod M
\]
requires careful interpretation when \( t = \frac{i}{K} \) is rational. Since modular exponentiation~\cite{menezes1996handbook} is defined only for integers, we distinguish two consistent approaches: root-based evaluation and relative approximation.

\paragraph{Root-based definition.}  
Assume \( p \in \mathbb{Z}_M^\times \), \( K \in \mathbb{N} \), and that \( p \) admits a formal \( K \)-th root in \( \mathbb{Z}_M \)~\cite{koblitz1994course}, i.e., there exists \( r \in \mathbb{Z}_M^\times \) such that:
\[
r^K \equiv p \mod M.
\]
Then define:
\[
p^{i/K} := r^i \mod M, \quad \text{for all } i \in \mathbb{Z}.
\]
This provides an explicit and algebraically valid assignment of \( p^t \) for all grid values \( t = \frac{i}{K} \). When \( M \) is prime and \( d = \gcd(K, M-1) \), a root \( r \) exists if:
\[
p^{(M-1)/d} \equiv 1 \mod M.
\]

\paragraph{Example.}  
Let \( M = 257 \), \( p = 3 \), \( K = 4 \). Then:
\[
3^{(256)/4} = 3^{64} \equiv 1 \mod 257 \Rightarrow \text{a 4-th root of } 3 \text{ exists}.
\]
One such root is \( r = 3^{65} \equiv 16 \mod 257 \), so:
\[
p^{i/4} := 16^i \mod 257.
\]

\paragraph{Relative approximation.}  
If no such root exists or is impractical to compute, define \( p^t \) via relative consistency. Fix a scale \( A \in \mathbb{Z}_M^\times \), then:
\[
p^t := A, \quad p^{t+1} := A \cdot p, \quad p^{t+2} := A \cdot p^2, \dots
\]
This enforces:
\[
\frac{p^{t+a}}{p^{t+b}} \equiv p^{a-b} \mod M,
\]
which preserves the invariant identity:
\[
\frac{s_0 \cdot t + s_1 \cdot (t + 2v + 1)}{s_2 \cdot (t + 2u) + s_3 \cdot (t + 2u + 2v + 1)} \equiv \frac{1}{p^{2u}} \mod M.
\]

\paragraph{Note.}  
If \( M \) is a power of two or \( K \) does not divide \( \phi(M) \), the root \( p^{1/K} \mod M \) may not exist. In such cases, the relative method is the only valid option.

\paragraph{Implementation strategy.}
\begin{itemize}
  \item \textbf{Root-based:} If \( r = p^{1/K} \mod M \) exists, precompute and store \( p^t = r^i \mod M \) for all \( t = \frac{i}{K} \) in use.
  \item \textbf{Relative-based:} Fix \( p^t = 1 \), define later terms via \( p^{t+i} = p^i \), and tabulate \( p^i \mod M \). This avoids any roots.
\end{itemize}

\paragraph{Cryptographic considerations.}
If \( s_i \) values are exposed to external verification, the value of \( p^t \) must remain fixed and secret-consistent:
\begin{itemize}
  \item Use a fixed \( A \in \mathbb{Z}_M^\times \), or derive it securely from a secret (e.g., \( A := H(k, t) \)).
  \item Avoid changing \( A \) across sessions; otherwise, \( s_i \) becomes non-reproducible and invalidates invariant-based checks.
\end{itemize}

\paragraph{Security warning.}
Inconsistent choice of \( A \) allows adversaries to inject false data:
\[
\text{If } A \neq A', \quad \frac{A'(1 + p)}{A' p^2(1 + p)} \not\equiv \frac{1}{p^2} \mod M.
\]

\subsection{\texorpdfstring{Evaluation of \( s_1 \) Using Structured Exponent Splitting and Pseudorandom \( p^t \)}{Evaluation of s₁ with Splitting and Pseudorandom pt}}

We now compute the value of the term \( s_1 = s_M(t + 2v + 1) \), continuing the same example with \( t = \frac{3}{4} \), \( u = 5 \), \( v = 17 \). In the previous subsection, we demonstrated how to compute the oscillator term \( \varphi(Ct) \); here we focus on how to compute \( s_1 \) without explicitly evaluating fractional exponents.

\paragraph{Definition recap.}
The discrete generating function is:
\[
s_M(t) := \frac{p^t + q_1 \cdot \varphi(Ct) + q_2 \cdot \psi(Ct)}{t} \mod M,
\]
with parameters:
\[
q_1 = 12, \quad q_2 = 35, \quad M = 257, \quad p = 3.
\]

\paragraph{Target evaluation point.}
We have:
\[
s_1 = s_M(t + 2v + 1) = s_M\left( \frac{3}{4} + 35 \right) = s_M\left( \frac{143}{4} \right).
\]

\paragraph{Rational decomposition.}
Let us decompose the rational index \( t = \frac{143}{4} \) as:
\[
t = \frac{143}{4} = 35 + \frac{3}{4},
\]
i.e., we extract:
\[
\text{Integer part: } a = 35, \quad \text{Numerator: } i = 3, \quad \text{Denominator: } K = 4.
\]

We now interpret \( p^t \) via the decomposition:
\[
p^t = p^{a + \frac{i}{K}} = p^a \cdot \text{PRF}(i, K),
\]
where:
\begin{itemize}
  \item \( p^a = 3^{35} \mod 257 \) is computed normally,
  \item \( \text{PRF}(i, K) \in \mathbb{Z}_M^\times \) is a pseudorandom function~\cite{bellare1995concrete} of \( (i, K) \), e.g., derived as:
    \[
    \text{PRF}(i, K) := H(i \parallel K) \mod M,
    \]
    for a secure hash \( H \).
\end{itemize}

\paragraph{Example values.}
Assume:
\[
\text{PRF}(3, 4) = 113 \mod 257, \quad p^{35} \mod 257 = 183.
\]
Then:
\[
p^{143/4} = p^t = 183 \cdot 113 = 20679 \equiv 81 \mod 257.
\]

Assume from prior computation:
\[
\varphi(Ct) = -2, \quad \psi(Ct) = 4.
\]
Then:
\[
s_1 = \frac{81 + 12 \cdot (-2) + 35 \cdot 4}{\frac{143}{4}} = \frac{81 - 24 + 140}{\frac{143}{4}} = \frac{197}{\frac{143}{4}}.
\]

Now invert the denominator modulo \( M \):
\[
\frac{1}{t} = \frac{1}{143/4} = \frac{4}{143} \mod 257.
\]
First compute \( 143^{-1} \mod 257 \). Since \( 143 \cdot 36 = 5148 \equiv 1 \mod 257 \), we have \( 143^{-1} \equiv 36 \mod 257 \). Then:
\[
\frac{4}{143} \equiv 4 \cdot 36 = 144 \mod 257.
\]

Final result:
\[
s_1 = 197 \cdot 144 \mod 257 = 28368 \mod 257 = 53.
\]

\paragraph{Conclusion.}
This demonstrates that even with pseudorandom or hash-derived approximations for \( p^t \), the invariant structure and modular arithmetic remain intact and computable. Oscillators \( \varphi, \psi \) were assumed precomputed, as shown in earlier subsections.

\section{\texorpdfstring{Hash-Based Generator for \( p^t \): Salt-Parameterized Invariant Structure}{Hash-Based Generator for pt: Salt-Parameterized Invariant Structure}}

The flexibility in defining \( p^t \) modulo \( M \), as long as the exponential progression is preserved, allows us to construct a cryptographically useful pseudorandom generator for \( p^t \) based on a public or private salt. This transforms an apparent ambiguity into a feature, enabling secure, parameterized, and dynamic control of the invariant structure.

\subsection{Salt-Parameterized Definition}

Let \( H(s) \in \mathbb{Z}_M^\times \) be a cryptographic hash of a chosen salt \( s \), interpreted modulo \( M \). Define:
\[
p^t := H(s), \quad p^{t + i} := H(s) \cdot p^i \mod M, \quad \text{for } i \in \mathbb{N}.
\]
Here:
\begin{itemize}
  \item \( s \in \{0,1\}^* \) is a session identifier, shared secret, nonce, or public salt;
  \item \( H \) is a secure hash function (e.g., SHA-256~\cite{rogaway2004cryptographic}), reduced modulo \( M \);
  \item \( p \in \mathbb{Z}_M^\times \) is the fixed base for exponentiation;
  \item \( t = \frac{i}{K} \in \mathbb{Q} \setminus \{0\} \) lies on a rational evaluation grid.
\end{itemize}

\subsection{Invariant Preservation}

This definition preserves the core structure of the discrete invariant. Any ratio:
\[
\frac{p^{t+a}}{p^{t+b}} \equiv \frac{H(s) \cdot p^a}{H(s) \cdot p^b} = p^{a - b} \mod M
\]
is independent of the salt and thus guarantees:
\[
\frac{p^t + p^{t + 2v + 1}}{p^{t + 2u} + p^{t + 2u + 2v + 1}} \equiv \frac{1}{p^{2u}} \mod M.
\]

This makes the invariant evaluation reproducible across devices or sessions that share the same salt, while rendering the values \( p^t \) pseudorandom to any observer lacking \( s \).

\subsection{Security Implications}

The use of a hash-based generator offers several cryptographic advantages:

\begin{itemize}
  \item \textbf{Pseudorandom masking}: Each evaluation of \( s_M(t) \) includes a hidden multiplicative factor \( H(s) \), unpredictable to an attacker without knowledge of \( s \).
  \item \textbf{Session separation}: Different salts produce distinct sets of \( s_i \), preventing cross-protocol or cross-session correlation.
  \item \textbf{Compact control}: The entire structure is determined by a single 256-bit salt.
  \item \textbf{Efficient implementation}: The hash is computed once per salt and reused via table lookup for \( p^i \).
\end{itemize}

\section{Symmetric Invariant-Based Scheme}

\subsection{Common Setup}

\begin{itemize}
  \item \textbf{Public parameters:}
    \begin{itemize}
      \item Invariant formula (generalized 4-point form);
      \item Modulus \( M \in \mathbb{N} \);
      \item Hash functions \( H_p, H_K, H_C, H_q, H_B, H_t \) with domain separation~\cite{krawczyk1997hmac};
      \item Cryptographic PRGs for generating oscillators \( \varphi, \psi \);
      \item Evaluation grid defined by rational steps \( t = \frac{i}{K} \), with \( \gcd(K, M) = 1 \).
    \end{itemize}
  \item \textbf{Shared secret:} A binary string \( S \in \{0,1\}^* \), known to both parties.
\end{itemize}

\subsection{Alice’s Generation}

Given a session-specific nonce \( z \in \{0,1\}^* \), Alice performs:

\begin{enumerate}
  \item Derives parameters:
    \[
    \begin{aligned}
      p &:= H_p(S, z) \bmod M, \quad \gcd(p, M) = 1, \\
      K &:= H_K(S, z), \quad C := H_C(S, z), \\
      i &:= H_t(S, z) \bmod K, \quad \text{abort if } i = 0, \\
      B &:= H_B(S, z) \bmod M, \\
      t &:= B + \frac{i}{K} \mod M.
    \end{aligned}
    \]
    \paragraph{Security note.}
The index \( t \) is derived from a fresh nonce \( z \), ensuring that each session uses a unique evaluation point on the rational grid. As long as \( z \) is never reused, the structure of \( t \), and thus the entire functional configuration \( s(t) \), remains unlinkable across sessions.  
Only repeated use of the same \( z \) (with different \( u, v \)) introduces the risk of \( t \)-based correlation attacks. The protocol therefore requires that each session uses a unique, non-repeating nonce.
  \item Chooses secure random integers \( u, v \in \mathbb{N} \) and computes:
    \[
    \begin{aligned}
      t_0 &= t, \\
      t_1 &= t + 2v + 1, \\
      t_2 &= t + 2u, \\
      t_3 &= t + 2u + 2v + 1.
    \end{aligned}
    \]
  \item Derives oscillator amplitudes:
    \[
    \begin{aligned}
      q_1 &= H_q(S, z, 1), & q_2 &= H_q(S, z, 2), \\
      q_3 &= H_q(S, z, 3), & q_4 &= H_q(S, z, 4).
    \end{aligned}
    \]
  \item Instantiates pseudorandom oscillators:
    \[
    \varphi := \mathrm{PRG}_\varphi(S, z), \quad \psi := \mathrm{PRG}_\psi(S, z),
    \]
    ensuring antiperiodicity and bounded range in \( \mathbb{Z}_M \).

  \item Evaluates function:
    \[
    \begin{aligned}
      s_0 &:= s_M(t_0; q_1, q_2), \\
      s_1 &:= s_M(t_1; q_1, q_2), \\
      s_2 &:= s_M(t_2; q_3, q_4), \\
      s_3 &:= s_M(t_3; q_3, q_4),
    \end{aligned}
    \]
    where
    \[
    s_M(t; q_i, q_j) := \frac{p^t + q_i \cdot \varphi(Ct) + q_j \cdot \psi(Ct)}{t} \bmod M.
    \]

  \item Checks invertibility condition:
    \[
    D := 2(s_1 p^{2u} - s_3) \not\equiv 0 \mod M \quad \text{(abort if not invertible)}.
    \]

  \item Computes verification hash:
    \[
    H_{\mathrm{check}} := H(S, v, s_1, s_3, u, z).
    \]

  \item Sends to Bob:
    \[
    \langle s_1,\ s_3,\ u,\ z,\ H_{\mathrm{check}} \rangle.
    \]
\end{enumerate}

\paragraph{Modular Reduction of Rational Indices.}

Let \( t = B + \frac{i}{K} \) be a rational number, where:
\begin{itemize}
  \item \( B \in \mathbb{Z}_M \),
  \item \( i \in \mathbb{Z} \setminus \{0\} \),
  \item \( K \in \mathbb{N} \) with \( \gcd(K, M) = 1 \).
\end{itemize}

We define the modular reduction of \( t \bmod M \) canonically as:
\[
t \bmod M := \left( (B \bmod M) \cdot K + i \right) \cdot K^{-1} \mod M,
\]
where:
\begin{itemize}
  \item \( K^{-1} \in \mathbb{Z}_M \) is the modular inverse of \( K \mod M \),
  \item all arithmetic is performed modulo \( M \).
\end{itemize}

This definition embeds rational values from the grid \( \mathbb{Z} + \frac{1}{K} \mathbb{Z} \) into \( \mathbb{Z}_M \) while preserving their fractional structure, and ensures consistent evaluation of expressions involving \( t \) under modular arithmetic.

\subsection{\texorpdfstring{Bob’s Verification and Recovery of \( v \)}{Bob's Verification and Recovery of v}}

\begin{enumerate}
  \item Recomputes session parameters:
    \[
    p,\ K,\ C,\ i,\ t,\ q_1,\dots,q_4,\ \varphi,\ \psi \quad \text{from } S, z.
    \]

  \item Evaluates:
    \[
    \begin{aligned}
      s_0 &:= s_M(t_0; q_1, q_2), \\
      s_2 &:= s_M(t_2; q_3, q_4), \\
      p^{2u} &\in \mathbb{Z}_M.
    \end{aligned}
    \]

  \item Checks denominator:
    \[
    D := 2(s_1 p^{2u} - s_3) \mod M, \quad \text{abort if } \gcd(D, M) \neq 1.
    \]

  \item Solves for \( v \in \mathbb{Z}_M \):
    \[
    v \equiv
    \frac{
      -s_0 \cdot p^{2u} t
      - s_1 \cdot p^{2u} (t + 1)
      + s_2 \cdot (t + 2u)
      + s_3 \cdot (t + 2u + 1)
    }{
      2(s_1 p^{2u} - s_3)
    }
    \mod M.
    \]

  \item Verifies integrity:
    \[
    H(S, v, s_1, s_3, u, z) \stackrel{?}{=} H_{\mathrm{check}}.
    \]

  \item Optional: Check \( v \in \mathbb{N} \) and within expected bounds.
\end{enumerate}

\section{Security Analysis and Assumptions}

This section consolidates all security-critical properties, assumptions, and remaining risks of the proposed invariant-based cryptographic scheme. While the construction lacks formal reduction proofs, its security relies on a structured combination of pseudorandom derivation, non-linear functional identities, and evaluation masking.

\subsection{Invariant-Based Hardness: Core Assumption}

\paragraph{Functional Construction.}
We consider a function \( s \colon \mathbb{Q} \to \mathbb{Z}_M \) with the following analytic structure:
\[
s(t) := \left( \frac{p^t + q_i \varphi(Ct) + q_j \psi(Ct)}{t} \right) \bmod M,
\]
where:
\begin{itemize}
  \item \( p, q_i, q_j \in \mathbb{Z}_M \) are derived from a shared secret \( S \) and session nonce \( z \),
  \item \( \varphi, \psi \colon \mathbb{Q} \to \mathbb{Z}_M \) are pseudorandom antiperiodic oscillators,
  \item \( t = \frac{i}{K} \in \mathbb{Q} \setminus \{0\} \) is a hidden rational evaluation point on a grid of spacing \( 1/K \).
\end{itemize}

Since modular exponentiation with rational exponents is undefined over \( \mathbb{Z}_M \), we reinterpret the expression \( p^t \) via index decomposition. Let \( t = \lfloor t \rfloor + \delta \), where \( \delta = t - \lfloor t \rfloor \in [0,1) \). We then define:
\[
p^t := p^{\lfloor t \rfloor} \cdot \mathrm{PRF}(i, K),
\]
where \( \mathrm{PRF}(i, K) \) is a pseudorandom function acting as a masked substitute for the fractional exponent \( p^\delta \). This is not an approximation, but a deterministic replacement that preserves unpredictability and hides the structure of \( \delta \) under the secret seed \( S \). The resulting masked form is:
\[
s(t) := \left( \frac{p^{\lfloor t \rfloor} \cdot \mathrm{PRF}(i, K) + q_i \varphi(Ct) + q_j \psi(Ct)}{t} \right) \bmod M.
\]

\paragraph{Invariant Structure.}
Let \( \Delta_1, \Delta_2, \Delta_3 \in \mathbb{Z} \) be fixed session parameters (e.g., \( \Delta_1 = 2v+1 \), etc.). Then the values
\[
s_0 := s(t), \quad s_1 := s(t+\Delta_1), \quad s_2 := s(t+\Delta_2), \quad s_3 := s(t+\Delta_3)
\]
are guaranteed to satisfy a known invariant identity of the form:
\[
I(s_0, s_1, s_2, s_3) = \frac{1}{p^{2u}} \bmod M,
\]
where \( u \in \mathbb{Z} \) is also session-specific and derived from \( S \) and \( z \).

\paragraph{Invariant Index-Hiding Problem (IIHP).}
We define the core assumption that underlies the security of the scheme:

\medskip
\fbox{%
\parbox{0.94\linewidth}{%
\textbf{IIHP (Strict Formulation).} Let \( s(t) \) be the masked function defined above, with internal parameters derived from a secret seed \( S \) and a session nonce \( z \). The adversary is given a fixed transcript:
\[
(s_1, s_3, u, z, H_{\mathrm{check}}),
\]
where \( s_1 = s(t + \Delta_1),\ s_3 = s(t + \Delta_3) \) with \( \Delta_1 = 2v + 1 \), \( \Delta_3 = 2u + 2v + 1 \), and \( v \in \mathbb{N} \) is a hidden session parameter.

The adversary must attempt to produce a forgery:
\[
(s^*,\ \delta^*) \in \mathbb{Z}_M \times \mathbb{Z}
\]
such that:
\begin{itemize}
  \item \( \delta^* \notin \{ \Delta_1,\ \Delta_3 \} \),
  \item The recovered value \( v^* := \mathrm{RecoverV}(s_1, s^*, u, z) \) satisfies:
  \[
  H(S,\ v^*,\ s_1,\ s^*,\ u,\ z) = H_{\mathrm{check}}.
  \]
\end{itemize}

The IIHP is said to be \emph{hard} if no probabilistic polynomial-time adversary can succeed in this task with non-negligible probability in the security parameter.
}%
}
\medskip

\paragraph{Security Implication.}
The hardness of IIHP ensures that no adversary can forge a new evaluation point consistent with the invariant structure without recovering the hidden index \( t \) or solving for the masked session offset \( v \). Since the invariant equation rigidly couples multiple points of the function \( s(t) \), any forgery that passes hash verification implies the ability to predict internal structure masked by pseudorandom oscillators and exponentials.

\paragraph{Why is the index \( t \) critical?}
The index \( t = \frac{i}{K} \) determines the geometric position of all evaluations involved in the invariant. Although it is never revealed, knowledge of \( t \) enables reconstruction of the functional configuration and alignment of all parameters. Without it, the attacker cannot simulate the structure of \( s(t + \delta) \), and thus cannot forge values that satisfy the invariant constraint. Preserving the secrecy of \( t \) protects not only individual evaluations but also the integrity of the underlying symbolic structure.

\subsection{Structural Rationale and Security Heuristics}

The security of the scheme does not rely on any single operation being hard in isolation. Instead, it emerges from the compounded structure of multiple interacting components. We outline below how each design element contributes to computational intractability and resistance to cryptanalysis.

\begin{itemize}
  \item \textbf{Hidden geometric anchor.}  
  The function is evaluated at a secret point on a rational grid. Although the spacing is known, the exact index remains concealed, and its disclosure would compromise all derived values. This index plays the role of a geometric anchor from which the invariant structure unfolds.

  \item \textbf{Exponentially perturbed evaluation.}  
  The core of the function includes exponential scaling relative to the hidden index. While exponentiation over modular rings is not inherently hard, the use of fractional exponents masked via pseudorandom factors makes recovery of the base or the index non-trivial.

  \item \textbf{Oscillatory masking.}  
  The oscillators act as a session-specific modulation, preventing analytical modeling or interpolation. Their antiperiodic nature ensures that nearby evaluations exhibit controlled but unpredictable shifts, which mask any simple algebraic relationship between points.

  \item \textbf{Invariant alignment constraint.}  
  The invariant imposes a rigid constraint on multiple evaluations, but only if the internal offsets and parameters are correctly aligned. Even a minor deviation in the index or amplitude breaks the identity. This rigidity serves as both a validator and a gatekeeper.

  \item \textbf{Unordered external appearance.}  
  To an observer lacking the secret seed, the outputs resemble structured randomness—consistent enough to satisfy an internal law, but indistinguishable from noise without knowledge of how the law is instantiated.

  \item \textbf{Session independence.}  
  All critical parameters, including oscillators, grid density, and base coefficients, are derived anew for each session. This eliminates the possibility of cross-session learning or cumulative attacks.

  \item \textbf{Combinatorial unpredictability.}  
  The function's form does not admit simplification via interpolation, lattice attacks~\cite{nguyen2001lattice}, or low-degree approximants. Its pseudorandom, non-linear, and index-dependent nature defies traditional cryptanalytic tools.

  \item \textbf{Minimal structure leakage.}  
  No component of the output directly reveals the index or base, and the masking layers make reverse-engineering impractical. Furthermore, positions in the invariant are semantically fixed, breaking reordering or substitution strategies.
\end{itemize}

\noindent These properties, while informal, illustrate the system’s robustness against both algebraic and statistical attacks. Together, they form the structural basis for believing the Invariant Index-Hiding Problem to be a meaningful cryptographic hardness assumption.

\subsection{Correctness and Implementation Requirements}

\begin{itemize}
  \item \textbf{Invertibility check:} To compute \( v \) from the invariant, the formula
  \[
  v \equiv \frac{-s_0 p^{2u} t - s_1 p^{2u}(t+1) + s_2(t+2u) + s_3(t+2u+1)}{2(s_1 p^{2u} - s_3)} \bmod M
  \]
  requires checking \( \gcd(2(s_1 p^{2u} - s_3), M) = 1 \). The scheme must abort otherwise.

  \item \textbf{Grid invertibility:} The index \( t = \frac{i}{K} \bmod M \) is reduced using:
  \[
  t := \left( (B \cdot K + i) \cdot K^{-1} \right) \bmod M,
  \]
  where \( \gcd(K, M) = 1 \) is required for invertibility.

\item \textbf{Oscillator generation:} The oscillators \( \varphi \) and \( \psi \) must be deterministically generated from the shared seed \( (S, z) \) using cryptographically secure pseudorandom mechanisms. Each oscillator behaves as a pseudorandom function evaluated at a rational index \( Ct \), where the period parameter \( C \) controls the scale and frequency of antiperiodic fluctuation.

Two implementation strategies are possible:

\begin{itemize}
  \item \emph{Precomputed PRG table:} The entire oscillator sequence of length \( C \) is generated once per session using a secure PRG (e.g., AES-CTR or HMAC-DRBG), producing a table:
  \[
  \varphi[i] := \text{PRG}(S, z, \varphi)[i], \quad \text{for } i = 0, \dots, C - 1.
  \]
  Then each evaluation \( \varphi(Ct) \) is computed as \( \varphi[\lfloor Ct \rfloor \bmod C] \), allowing constant-time random access with no recomputation. This strategy is efficient for moderate values of \( C \) (e.g., \( 2^{24} \) or \( 2^{32} \)) if memory permits.

  \item \emph{On-demand PRF generation:} If memory constraints prohibit precomputation, oscillator values can be derived directly using a pseudorandom function:
  \[
  \varphi(Ct) := \text{PRF}_{\varphi}(S, z, \lfloor Ct \rfloor),
  \]
  where the PRF is instantiated via HMAC, AES, or another secure keyed function. This eliminates the need for a table but incurs higher per-evaluation cost.
\end{itemize}

Both methods yield function values indistinguishable from random~\cite{naor1997prf} without knowledge of \( S \), and ensure that nearby evaluations \( \varphi(C(t + \delta)) \) and \( \varphi(Ct) \) exhibit controlled but unpredictable variation.

  \item \textbf{Hash binding:} The hash \( H_{\mathrm{check}} := H(S, v, s_1, s_3, u, z) \) must be included in the transmitted message to ensure integrity and resist tampering.

  \item \textbf{Parameter bounds:} All parameters should lie within practical and cryptographically sound bounds. In particular:
  \[
  |S| \geq 256,\quad |z| \geq 128,\quad u,v \leq 2^{64}.
  \]
  This range ensures efficient modular exponentiation and bounded side-channel exposure, but larger values are permitted in applications that can accommodate the computational cost.

  \item \textbf{Replay prevention:} Nonce \( z \) must be globally unique per session, enforced via timestamps, counters, or ephemeral keys.
\end{itemize}

\subsection{Security Game and Advantage Definition (Realistic Variant)}

We define a security game that precisely models the real capabilities of an adversary~\cite{bellare1998provable} interacting with the symmetric invariant-based protocol. The goal is to capture the difficulty of forging a new valid value consistent with the transmitted invariant identity.

\paragraph{Game IIHP (Strict Version).}
Let \( \lambda \in \mathbb{N} \) be the security parameter. The protocol parameters are chosen such that:
\[
\log_2 M \geq \lambda,\quad K \geq 2^{\lambda/2},\quad |S| = |z| = \lambda.
\]

\paragraph{Public Output.}
The adversary \( \mathcal{A} \) is given the following session transcript:
\[
(s_1,\ s_3,\ u,\ z,\ H_{\mathrm{check}}),
\]
where:
\[
H_{\mathrm{check}} := H(S, v, s_1, s_3, u, z),
\]
and \( v \in \mathbb{N} \) is a hidden session-dependent value recoverable only through evaluation of the invariant relation.

\paragraph{No Oracles.}
The adversary has no access to oracle evaluations of \( s(t + \delta) \), as the protocol does not expose any such interface. All available information is included in the transcript.

\paragraph{Winning Condition.}
The adversary outputs a forgery:
\[
(s^*,\ \delta^*) \in \mathbb{Z}_M \times \mathbb{Z},
\]
and wins if all of the following hold:
\begin{enumerate}
  \item \( \delta^* \notin \{ \Delta_1,\ \Delta_3 \} \), where:
  \[
  \Delta_1 := 2v + 1,\quad \Delta_3 := 2u + 2v + 1;
  \]
  \item Let \( v^* := \mathrm{RecoverV}(s_1,\ s^*,\ u,\ z) \) be the value computed using the invariant;
  \item The hash check passes:
  \[
  H(S,\ v^*,\ s_1,\ s^*,\ u,\ z) = H_{\mathrm{check}}.
  \]
\end{enumerate}

\paragraph{Value Recovery Function.}
Let \( \mathrm{RV}:= \mathrm{RecoverV}(s_1, s^*, u, z) \) denote the result of applying the invariant recovery formula (used by Bob) to compute a candidate \( v^* \in \mathbb{Z}_M \) from the forged value \( s^* \). The function is defined as:
\[
\mathrm{RV} :=
\frac{
  -s_0 \cdot p^{2u} \cdot t
  - s_1 \cdot p^{2u} \cdot (t + 1)
  + s_2 \cdot (t + 2u)
  + s^* \cdot (t + 2u + 1)
}{
  2(s_1 \cdot p^{2u} - s^*)
}
\mod M,
\]
where:
\begin{itemize}
  \item \( p, K, C, i, t \) are deterministically recomputed from \( (S, z) \);
  \item \( s_0 := s_M(t_0;\ q_1,\ q_2) \), \( s_2 := s_M(t_2;\ q_3,\ q_4) \), using the same generation logic as in the original protocol;
  \item Oscillators \( \varphi, \psi \) and amplitudes \( q_i \) are regenerated from \( (S, z) \) as per the session definition;
  \item All arithmetic is performed in \( \mathbb{Z}_M \).
\end{itemize}

The output \( v^* \) is the value that Bob would compute assuming \( s^* \) was the legitimate third evaluation in the invariant tuple. If this value satisfies the hash condition, the forgery is accepted.

\paragraph{Adversarial Advantage.}
The adversary's advantage is defined as:
\[
\mathsf{Adv}_{\mathsf{IIHP}}^{\mathcal{A}}(\lambda) := \Pr\left[
\begin{array}{l}
\mathcal{A} \text{ outputs } (s^*,\ \delta^*) \text{ such that:} \\
\quad \delta^* \notin \{2v + 1,\ 2u + 2v + 1\},\ \text{and} \\
\quad H(S,\ v^*,\ s_1,\ s^*,\ u,\ z) = H_{\mathrm{check}}
\end{array}
\right].
\]

\paragraph{Hardness Assumption.}
We assume that for all PPT adversaries \( \mathcal{A} \), the advantage is negligible:
\[
\mathsf{Adv}_{\mathsf{IIHP}}^{\mathcal{A}}(\lambda) \leq \mathrm{negl}(\lambda).
\]
This assumption is grounded in the pseudorandom structure of \( s(t) \), the secrecy of the shared seed \( S \), and the infeasibility of inverting the invariant without recovering the hidden index \( t \) or parameter \( v \). It can be viewed as an invariant-constrained analogue of pseudorandom function indistinguishability or structure-respecting forgery resistance.

\section{Heuristic Justification and Hardness of Game IIHP}

We analyze the cryptographic difficulty of Game IIHP from the perspective of a real-world adversary. The goal is to understand why forging a valid output that passes the invariant-based hash check is infeasible under standard assumptions.

\subsection{Adversarial Model and Observability}

The adversary operates in a constrained setting:
\begin{itemize}
  \item Receives only a static transcript \( (s_1, s_3, u, z, H_{\mathrm{check}}) \) from a single session.
  \item Has no access to oracle queries \( s(t + \delta) \), and cannot interact with or probe the evaluation function.
  \item Possesses no knowledge of the secret index \( t = \frac{i}{K} \), the internal parameters \( p, C, v \), or the oscillator values \( \varphi, \psi \).
\end{itemize}

The only permitted action is to produce a candidate pair \( (s^*, \delta^*) \) such that a derived value \( v^* := \mathrm{RecoverV}(s_1, s^*, u, z) \) passes the final hash check:
\[
H(S, v^*, s_1, s^*, u, z) = H_{\mathrm{check}}.
\]

\subsection{Attack Vectors and Their Complexity}

\paragraph{Brute-force search.}
The adversary could attempt to guess \( s^* \in \mathbb{Z}_M \) and hope the recovered \( v^* \) satisfies the hash condition. The probability of success is bounded by the size of \( M \):
\[
\Pr[\text{hash collision}] = O(2^{-\lambda}) \quad \text{for } \log_2 M \geq \lambda.
\]

\paragraph{Structure inference from known outputs.}
The attacker might hope to interpolate \( s(t + \delta^*) \) using \( s_1 = s(t + \Delta_1) \) and \( s_3 = s(t + \Delta_3) \). This fails because:
\begin{itemize}
  \item The internal index \( t \) is rational and secret;
  \item The exponent \( p^t := p^{\lfloor t \rfloor} \cdot \mathrm{PRF}(i, K) \) is masked via PRF-like transformation;
  \item Oscillators \( \varphi, \psi \) are seeded with \( S \) and modulate the signal in a pseudorandom, anti-interpolatable way.
\end{itemize}

\paragraph{Hash collision.}
Forging \( (s^*, v^*) \) to pass the hash check without solving the functional structure is as hard as breaking the collision resistance of \( H \), which is assumed negligible.

\paragraph{Grid-based inversion.}
Even if the attacker tries to guess the grid index \( t = \frac{i}{K} \), the search space is exponential for \( K \geq 2^{\lambda/2} \). Moreover, each guess requires evaluating masked expressions with no access to PRGs.

\subsection{Comparison to Standard Cryptographic Assumptions}

Game IIHP implicitly combines and extends the hardness of several standard cryptographic primitives and design paradigms:

\begin{itemize}
  \item \textbf{PRF security:} The masked function \( s(t) \) incorporates pseudorandom components (oscillators and exponent masks) derived from secure seeds, making its outputs indistinguishable from random to outsiders.
  
  \item \textbf{Message forgery resistance:} Any valid response must satisfy a structured algebraic invariant and match the committed hash~\cite{bellare1996mac}. This constraint is stronger than typical MAC or signature consistency, as it is symbolic and value-sensitive.
  
  \item \textbf{Hidden index problems:} The unknown rational index \( t \) serves a similar role to a hidden exponent in Diffie–Hellman-based problems. Recovering it from function evaluations is infeasible without internal secrets.
  
  \item \textbf{Pseudonoise masking:} The oscillators \( \varphi, \psi \) act as pseudorandom noise layers. Their values shift with high entropy across \( t \), making algebraic interpolation or approximation attacks ineffective—conceptually similar to the masking in LWE~\cite{regev2005lwe}, though without actual lattices.
  
  \item \textbf{Hash-based binding:} Final verification relies on a hash function applied to the recovered value \( v^* \), which ensures that even if the algebraic structure is (partially) satisfied, the forgery fails unless the hash also matches.
\end{itemize}

\subsection{Quantitative Hardness Estimate}

Assuming:
\begin{itemize}
  \item \( H \) is modeled as a random oracle;
  \item PRGs \( \mathrm{PRG}_\varphi, \mathrm{PRG}_\psi \) are secure;
  \item Parameters \( M, K \) satisfy \( \log_2 M \geq \lambda \), \( K \geq 2^{\lambda/2} \);
\end{itemize}
then for any PPT adversary \( \mathcal{A} \),
\[
\mathsf{Adv}_{\mathsf{IIHP}}^{\mathcal{A}}(\lambda) \leq \mathrm{negl}(\lambda).
\]

This guarantees that the probability of producing a forgery that passes the invariant hash binding is negligible in the security parameter \( \lambda \).

\section{Formal Soundness: Invariant-Based Forgery Resistance}

In this section, we formally analyze the security guarantees of the symmetric invariant-based scheme under the assumption that the Invariant Index-Hiding Problem (IIHP) is hard. Our goal is to demonstrate that no adversary can successfully bypass the invariant structure or construct valid forgeries without implicitly solving IIHP. In other words, we reduce the feasibility of a successful attack to the hardness of IIHP: we show that all strategies available to the adversary ultimately require predicting hidden internal parameters or satisfying the invariant identity at an unauthorized offset, both of which are infeasible under standard cryptographic assumptions.

We distinguish two key cases depending on how session parameters are reused or distributed. For each case, we prove that the probability of successful forgery remains negligible in the security parameter \( \lambda \).

\subsection{Case 1: Unique Session Parameters}

\begin{lemma}[Forgery Resistance under Unique Session Parameters]
Let the session parameters \( z \in \{0,1\}^\lambda \) and \( u \in \mathbb{N} \) be chosen freshly and independently for each execution of the protocol. Then for any probabilistic polynomial-time adversary \( \mathcal{A} \), the probability that \( \mathcal{A} \) produces a valid forgery \( (s^*, \delta^*) \) satisfying the IIHP game conditions is negligible in \( \lambda \). In particular,
\[
\mathsf{Adv}_{\mathsf{IIHP}}^{\mathcal{A}}(\lambda) \leq \mathrm{negl}(\lambda).
\]
\end{lemma}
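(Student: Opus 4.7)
The plan is to prove the lemma by a sequence of game transitions that progressively idealize the cryptographic primitives, ultimately reducing the adversary's success probability to a quantity bounded by standard random-oracle and pseudorandom-generator assumptions. Because the session nonce $z$ is fresh per execution, all internal parameters derived from $(S,z)$ are independent across sessions, so it suffices to analyze a single-session experiment without needing to reason about cross-session leakage.

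First I would start with the real IIHP game $G_0$ and transition to a game $G_1$ in which each keyed hash $H_p, H_K, H_C, H_q, H_B, H_t, H$ is modeled as an independent random oracle and each $\mathrm{PRG}_\varphi, \mathrm{PRG}_\psi$ is replaced by a truly random function. Standard hybrid arguments bound $|\Pr[G_1 \Rightarrow 1] - \Pr[G_0 \Rightarrow 1]|$ by a negligible function of $\lambda$ under the assumed PRF/PRG/ROM properties; this justifies analyzing $\mathcal{A}$ exclusively in $G_1$, where every session value whose computation involves $S$ is a uniformly random element of its codomain from $\mathcal{A}$'s perspective.

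Next I would characterize $\mathcal{A}$'s effective view. The transcript $(s_1, s_3, u, z, H_{\mathrm{check}})$ depends on session parameters $(p, K, C, t, q_1,\ldots,q_4, \varphi, \psi, v)$, all derived from $(S,z)$. Without $S$ the adversary cannot recompute $s_0, s_2, p^{2u}, t$, which are precisely the quantities fed into $\mathrm{RecoverV}$. Hence for any candidate $s^*$, the value $v^* = \mathrm{RecoverV}(s_1, s^*, u, z)$ that the challenger computes is unpredictable to $\mathcal{A}$: conditioned on the transcript, $v^*$ inherits high entropy from the masked exponential $p^t$ and the pseudorandom oscillators. I would then reduce winning to predicting a secret-keyed random oracle output: the acceptance test is $H(S, v^*, s_1, s^*, u, z) = H_{\mathrm{check}}$, and since $\mathcal{A}$ does not know $S$, each of its $q_H$ random-oracle queries lands at an input that omits the correct prefix $S$, producing an independent uniform response; matching $H_{\mathrm{check}}$ thus succeeds with probability at most $q_H \cdot 2^{-\lambda}$. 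The trivial attack $s^* = s_3$ would force $v^* = v$, but is excluded by the non-replay condition: since $\mathrm{RecoverV}$ is a Möbius transformation in $s^*$ and hence injective away from its pole, $s_3$ is the unique preimage of $v$, and any admissible $s^* \neq s_3$ yields $v^* \neq v$ and a fresh random-oracle input.

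The main obstacle is the conditional-entropy step: rigorously showing that, given the transcript $(s_1, s_3, u, z, H_{\mathrm{check}})$, the adversary retains negligible ability to predict $v^*$ for any chosen $s^*$. One must argue that the algebraic map $s^* \mapsto v^*$, though deterministic, is computationally indistinguishable from a random function on $\mathbb{Z}_M$ from $\mathcal{A}$'s viewpoint, because $s_0, s_2, p^{2u}, t$ are computationally hidden by the pseudorandom masking of $p^t$ via $\mathrm{PRF}(i,K)$, the antiperiodic pseudorandom oscillators, and the equivalence-class multiplicity of invariant-preserving index shifts established in the modular analysis. Formalizing this either as a direct reduction to PRF indistinguishability or via an explicit entropy accounting over the uniform choice of session parameters is the delicate technical core; once established, the remaining hash and PRG bounds combine by standard hybrid techniques to yield $\mathsf{Adv}_{\mathsf{IIHP}}^{\mathcal{A}}(\lambda) \leq \mathrm{negl}(\lambda)$.
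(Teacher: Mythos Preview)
Your proposal is sound and reaches the same conclusion as the paper, but via a considerably more structured route. The paper's own proof is a direct heuristic argument: it observes that all session parameters $(t, p, C, q_i, \varphi, \psi)$ are derived from $(S, z)$ and are therefore pseudorandom from the adversary's viewpoint, concludes that $\mathcal{A}$ cannot evaluate $s(t+\delta^*)$ or predict $\mathrm{RecoverV}$, and bounds the forgery probability by $\tfrac{1}{M} + \mathrm{negl}(\lambda)$ via hash collision resistance---with no explicit game-hopping, no random-oracle query accounting, and no treatment of the $s^* = s_3$ replay case (the M\"obius-injectivity observation you invoke appears only later in the paper, in a separate lemma of Section~11). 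Your game-transition framework, the explicit $q_H \cdot 2^{-\lambda}$ bound, and your honest isolation of the conditional-entropy step as the delicate technical core make the logical dependencies much clearer; the paper's version is terser but correspondingly vaguer at exactly the point you flag as the main obstacle. Both arguments ultimately rest on the same informal claim---that the masked parameters computationally hide $v^*$ for any adversarially chosen $s^*$---so neither constitutes a fully formal reduction, but yours is the more transparent scaffold on which one could be built.
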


\begin{proof}[Proof of Lemma 1]

Let \( \mathcal{A} \) be any probabilistic polynomial-time adversary that is given a single protocol transcript:
\[
(s_1, s_3, u, z, H_{\mathrm{check}}),
\]
where:
\begin{itemize}
  \item \( z \in \{0,1\}^\lambda \) is a unique session nonce,
  \item \( u \in \mathbb{N} \) is a unique session parameter,
  \item \( s_1 := s(t + 2v + 1) \), \( s_3 := s(t + 2u + 2v + 1) \), and \( H_{\mathrm{check}} := H(S, v, s_1, s_3, u, z) \),
  \item \( S \) is a shared secret known only to the honest parties.
\end{itemize}

The adversary must produce a forgery \( (s^*, \delta^*) \in \mathbb{Z}_M \times \mathbb{Z} \) such that:
\begin{enumerate}
  \item \( \delta^* \notin \{2v + 1,\ 2u + 2v + 1\} \),
  \item the value \( v^* := \mathrm{RecoverV}(s_1, s^*, u, z) \in \mathbb{Z}_M \) satisfies:
  \[
  H(S, v^*, s_1, s^*, u, z) = H_{\mathrm{check}}.
  \]
\end{enumerate}

We analyze the probability of such a forgery being successful.

\vspace{0.5em}
\noindent\textbf{Key Observation:}
The values \( s_1, s_3 \) are evaluations of a masked function
\[
s(t) := \left( \frac{p^{\lfloor t \rfloor} \cdot \mathrm{PRF}(i, K) + q_i \varphi(Ct) + q_j \psi(Ct)}{t} \right) \bmod M,
\]
with internal parameters derived from the unique session tuple \( (S, z, u, v) \). These parameters define a unique structure for this session, and are not repeated or reused.

Hence, the attacker faces the following challenges:
\begin{itemize}
  \item The index \( t \) is derived via a hidden hash of \( (S, z) \), and belongs to a rational grid \( \frac{i}{K} \bmod M \), where both \( i \) and \( K \) are unknown and derived via cryptographic hash functions.
  \item The values \( p, C, q_i, q_j, \varphi, \psi \) are deterministically derived from \( (S, z) \), and are pseudorandom from the adversary's perspective.
  \item The output \( s(t+\delta) \) is pseudorandomly masked and nonlinear in \( \delta \).
\end{itemize}

The adversary has no method of evaluating \( s(t + \delta^*) \) or predicting \( \mathrm{RecoverV}(s_1, s^*, u, z) \) without full knowledge of \( t \), \( p \), \( \varphi, \psi \), and the oscillator amplitudes. Each of these depends on \( S \) and is cryptographically protected.

Thus, for any value \( s^* \) guessed by the adversary, the probability that the derived \( v^* \) will satisfy:
\[
H(S, v^*, s_1, s^*, u, z) = H(S, v, s_1, s_3, u, z)
\]
is negligible in \( \lambda \), due to the collision resistance and pseudorandomness of \( H \), and the uniqueness of \( v \neq v^* \) when \( s^* \neq s_3 \).

\vspace{0.5em}
\noindent\textbf{Conclusion:}
Unless the adversary breaks the collision resistance of \( H \), or predicts internal values masked by PRFs, the success probability is bounded by:
\[
\Pr[\text{forgery}] \leq \Pr[\text{guess correct } s^*] + \Pr[\text{hash collision}] \leq \frac{1}{M} + \mathrm{negl}(\lambda),
\]
which is negligible for \( \log_2 M \geq \lambda \).

\end{proof}

\subsection{\texorpdfstring{Case 2: Multiple Evaluations with Fixed \((z, u)\)}{Case 2: Fixed (z, u) Evaluations}}

\begin{lemma}[Invariant Robustness under Bounded Evaluation Reuse]
Fix a pair \( (z, u) \) shared across a session, and let Alice generate at most \( V_{\max} \in \mathbb{N} \) distinct values of \( v \), producing corresponding public outputs \( (s_1^{(i)}, s_3^{(i)}, u, z, H_{\mathrm{check}}^{(i)}) \) for \( i = 1, \dots, V_{\max} \). Then for any adversary \( \mathcal{A} \) given access to this multiset, the probability of constructing a new pair \( (s^*, \delta^*) \) that passes the IIHP hash verification and does not duplicate any of the existing \( \delta_i := 2v_i + 1,\ 2u + 2v_i + 1 \) is negligible in \( \lambda \), provided \( v_i \) are selected randomly.
\end{lemma}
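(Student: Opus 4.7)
The plan is to reduce adversarial success to a combination of PRF/PRG distinguishing advantage, collision resistance of $H$, and a counting argument over the bounded number of exposed evaluations. As a first step I would observe that under a fixed pair $(z, u)$, all derived quantities $p, K, C, i, t, q_1, \dots, q_4, \varphi, \psi$ are determined by $(S, z)$; the $V_{\max}$ transcripts thus expose the function $s$ at exactly $2 V_{\max}$ abscissae of the form $t + 2 v_i + 1$ and $t + 2u + 2 v_i + 1$, each accompanied by a hash commitment that binds the corresponding $v_i$. The adversary's task reduces to guessing a function value at a fresh grid abscissa and then hitting a matching hash output.

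Next, I would run a standard hybrid argument, replacing the oscillators $\varphi, \psi$ and the fractional-exponent mask $\mathrm{PRF}(\cdot, K)$ by truly random antiperiodic and random functions of matching range; this step is paid for by the PRG/PRF distinguishing advantage, which is negligible by assumption. In the idealised world each value $s(t+\delta)$ decomposes into a publicly computable deterministic skeleton plus a uniformly random oscillator contribution indexed by $\lfloor C(t+\delta)\rfloor \bmod 2C$. Since the $v_i$ are drawn at random, a birthday-style calculation shows that with overwhelming probability the oscillator slot corresponding to any admissible forgery offset $\delta^*$ is disjoint from the slots already revealed, so $s(t+\delta^*)$ remains information-theoretically uniform from the adversary's view.

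The main obstacle, I expect, will be controlling the residual correlations between the exposed oscillator slots and the slot needed by a clever forgery. Two kinds of alignment must be ruled out: oscillator-index collisions modulo $2C$, which would let the adversary transport a known oscillator value across sessions, and structured patterns in the reused coefficients $q_1, \dots, q_4$ that might let the adversary cancel out the exponential mask via a linear combination of the observed $s_1^{(i)}, s_3^{(i)}$. I would bound the first via a birthday estimate of $O(V_{\max}^2 / C)$ on slot collisions, which forces the parameter regime $C \gg V_{\max}^2$; for the second, I would verify that the $2 V_{\max}$ honest samples are not sufficient to jointly identify the masked unknowns $(p^{\lfloor t\rfloor}, \mathrm{PRF}(i,K), \varphi, \psi)$ whenever $V_{\max}$ stays beneath a rank threshold set by the oscillator period and the four-amplitude structure. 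This is the step that most delicately depends on the precise distribution of $v_i$ and on $C$ being large relative to $V_{\max}$.

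Finally, conditioned on the good event that no such alignment occurs, the candidate $v^* := \mathrm{RecoverV}(s_1, s^*, u, z)$ induced by any single forgery attempt is uniformly distributed in $\mathbb{Z}_M$ relative to the adversary's view, because $v^*$ inherits its randomness from the still-hidden skeleton value $p^{\lfloor t\rfloor} \cdot \mathrm{PRF}(i,K)$. Modelling $H$ as a random oracle, a union bound over the $V_{\max}$ stored commitments caps the hash-matching probability by $(V_{\max}+1) \cdot 2^{-\lambda}$; summing this term with the PRG/PRF distinguishing advantage and the birthday bound yields $\mathsf{Adv}_{\mathsf{IIHP}}^{\mathcal{A}}(\lambda) \leq \mathrm{negl}(\lambda)$, as required.
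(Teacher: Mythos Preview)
Your hybrid-and-birthday strategy rests on a mistaken model of the oscillator contribution. Recall that $\varphi$ is antiperiodic with period $C$ in its argument, and that every evaluation in the protocol occurs at an \emph{integer} offset $\delta$ of the hidden base point $t$: the oscillator is sampled at $C(t+\delta)=Ct+C\delta$, and since $C\delta$ is an integer multiple of the antiperiod, one has $\varphi(C(t+\delta))=(-1)^{\delta}\varphi(Ct)$ for every integer $\delta$ (and likewise for $\psi$). So there are not $\Theta(C)$ oscillator ``slots'' being hit independently by the random $v_i$; there are exactly two, determined by the parity of $\delta$, and every honest output $s_1^{(i)},s_3^{(i)}$ as well as every admissible forgery target lands in one of them. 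The birthday bound $O(V_{\max}^2/C)$ is therefore vacuous, and your claim that $s(t+\delta^*)$ ``remains information-theoretically uniform'' because its oscillator slot is fresh simply fails: the oscillator contribution at $\delta^*$ is $\pm q_i\varphi(Ct)\pm q_j\psi(Ct)$, the very same quantities already baked into every observed sample.

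This is not a cosmetic issue; it changes the nature of the problem. The genuine source of hardness is not per-evaluation randomness but the joint secrecy of the fixed tuple $(t,\,p,\,p^{\lfloor t\rfloor}\cdot\mathrm{PRF}(i,K),\,\varphi(Ct),\,\psi(Ct),\,q_1,\dots,q_4)$, together with the fact that the adversary does not know the offsets $v_i$ at which the samples were taken. The paper's proof accordingly does not attempt an information-theoretic argument: it argues (heuristically) that the $2V_{\max}$ observed values form an underdetermined nonlinear system over $\mathbb{Z}_M$ whose unknowns include both the hidden parameters and the unknown $v_i$, analyses the combinatorics of matching guessed $v_i$'s to transcripts, and then falls back on hash binding for any residual forgery. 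Your hybrid replacing the PRG/PRF by ideal functions is a reasonable first move, but after that step you must analyse solvability of this shared-unknowns system rather than appeal to fresh oscillator entropy; the latter does not exist in this construction.
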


\begin{proof}[Proof of Lemma 2]

Let \( z \in \{0,1\}^\lambda \) and \( u \in \mathbb{N} \) be fixed, and let the honest sender (Alice) generate at most \( V_{\max} \in \mathbb{N} \) distinct values \( v_1, \dots, v_{V_{\max}} \in \mathbb{N} \), producing corresponding outputs:
\[
\left( s_1^{(i)}, s_3^{(i)}, u, z, H_{\mathrm{check}}^{(i)} \right), \quad \text{for } i = 1,\dots,V_{\max},
\]
where:
\[
\begin{aligned}
s_1^{(i)} &= s(t + \Delta_1^{(i)}), \quad \Delta_1^{(i)} = 2v_i + 1, \\
s_3^{(i)} &= s(t + \Delta_3^{(i)}), \quad \Delta_3^{(i)} = 2u + 2v_i + 1, \\
H_{\mathrm{check}}^{(i)} &= H(S, v_i, s_1^{(i)}, s_3^{(i)}, u, z),
\end{aligned}
\]
and \( t = t(S, z) \) is a hidden rational index on the evaluation grid.

Suppose the adversary \( \mathcal{A} \) is given all of these public outputs. Its goal is to construct a new pair \( (s^*, \delta^*) \) such that:
\begin{enumerate}
  \item \( \delta^* \notin \{\Delta_1^{(i)}, \Delta_3^{(i)} : 1 \leq i \leq V_{\max} \} \),
  \item The recovered value \( v^* := \mathrm{RecoverV}(s_1^{(j)}, s^*, u, z) \) (for some \( j \)) satisfies:
  \[
  H(S, v^*, s_1^{(j)}, s^*, u, z) = H_{\mathrm{check}}^{(j)}.
  \]
\end{enumerate}

We now argue that such forgery is infeasible.

\paragraph{1. No structural predictability.}
The values \( s_1^{(i)}, s_3^{(i)} \) are masked nonlinear function evaluations with session-dependent oscillator values, amplitudes \( q_k \), and hidden exponentials. Even with multiple samples:
\[
\left\{ s_1^{(i)},\ s_3^{(i)} \right\}_{i=1}^{V_{\max}},
\]
the adversary cannot recover:
\begin{itemize}
  \item the secret index \( t \), or
  \item the internal oscillator states \( \varphi(C(t+\delta)), \psi(C(t+\delta)) \), or
  \item the masked fractional exponent \( p^t := p^{\lfloor t \rfloor} \cdot \mathrm{PRF}(i, K) \).
\end{itemize}

Thus, the attacker cannot interpolate or simulate a new value \( s^* = s(t + \delta^*) \) for any \( \delta^* \notin \{\Delta_k^{(i)}\} \).

\paragraph{2. Invariant structure is rigid.}
The recovery formula for \( v^* \) imposes strict consistency between four evaluation points, including the unknown base point \( t \). Without correct alignment to the internal grid structure, any attempt to fabricate a new \( s^* \) will result in an invalid value of \( v^* \) under the invariant. This invalid \( v^* \) will then fail the hash verification:
\[
H(S, v^*, s_1^{(j)}, s^*, u, z) \neq H_{\mathrm{check}}^{(j)}.
\]

\paragraph{3. No hash substitution.}
Even if the adversary correctly guesses one of the session values \( v_i \), the corresponding hash \( H_{\mathrm{check}}^{(i)} = H(S, v_i, s_1^{(i)}, s_3^{(i)}, u, z) \) remains secure. This is because reproducing either \( s_1^{(i)} \) or \( s_3^{(i)} \) without knowing the internal index \( t \) and the oscillator functions is infeasible. Both values are derived from pseudorandom function evaluations at masked, rationally shifted indices, and cannot be predicted or reconstructed from \( v_i \) alone. Therefore, even with a correct \( v_i \), the adversary cannot forge the remaining values required to satisfy the hash equation. The hash remains binding on the full tuple.

\paragraph{4. Probability bound.}
Even if the adversary attempts to guess a correct \( s^* \in \mathbb{Z}_M \) such that the hash equation holds by chance, the probability is:
\[
\Pr[\text{hash collision}] \leq 2^{-\lambda}.
\]
Repeating this across \( V_{\max} \leq \text{poly}(\lambda) \) known pairs does not yield a non-negligible advantage.

\paragraph{5. On the Feasibility of Algebraic Inversion via Guessed Values.}
Suppose the adversary accumulates \( V \) valid public pairs \( (s_1^{(i)}, s_3^{(i)}) \) corresponding to different hidden values \( v_i \), all within the same fixed session \((z, u)\). Let us assume that a subset of these values, say \( m \), are correctly guessed, i.e., the adversary knows \( v_i \) such that:
\[
s_1^{(i)} = s(t + 2v_i + 1), \quad
s_3^{(i)} = s(t + 2u + 2v_i + 1),
\]
for \( i = 1,\dots,m \).

The adversary now has access to a system of \( 2m \) functional equations involving the same underlying hidden parameters:
\[
p,\ q_1,\ q_2,\ q_3,\ q_4,\ t,\ K,\ C,\ \varphi,\ \psi.
\]
These parameters govern the evaluations of the masked function \( s(t) \) at rationally shifted inputs. The adversary could in principle attempt to solve for the unknowns by treating these as a nonlinear algebraic system over \( \mathbb{Z}_M \).

However, this attack faces the following fundamental limitations:

\begin{itemize}
  \item \textbf{Dimensionality.} The number of independent unknowns is at least 8 (assuming fixed oscillator structure), potentially more if oscillators are PRG outputs with internal state. To determine all parameters uniquely, the adversary would need at least \( m = 5 \) full value pairs \( (s_1^{(i)}, s_3^{(i)}) \), i.e., at least 10 correct values from the protocol, each requiring a correct prediction of \( v_i \).

  \item \textbf{Brute-forcing individual values \( v_i \) from public outputs.}  
  Given a single public pair \( (s_1^{(i)}, s_3^{(i)}) \), the adversary may attempt to guess the internal session parameter \( v_i \) that determines the offsets \( \delta_1 = 2v_i + 1 \), \( \delta_3 = 2u + 2v_i + 1 \). However, the recovery formula for \( v \) is highly sensitive to the hidden base index \( t \), as well as other secret parameters \( p \), oscillator amplitudes, and PRF-masked components. Without knowledge of these values, any attempt to evaluate the correctness of a guessed \( v_i \) reduces to brute-force search. If \( v_i \) is uniformly sampled from a 32-bit or 64-bit domain, the chance of a successful guess is:
  \[
  \Pr[v_i^{\mathrm{guess}} = v_i] = 2^{-32} \quad \text{or} \quad 2^{-64},
  \]
  respectively. Since each guess requires full recomputation of masked function values and recovery logic, even moderate repetition becomes computationally infeasible. No partial structural leakage is available to narrow the search space, as the invariant recovery formula behaves as a pseudorandom function over masked inputs.

  \item \textbf{Matching leaked values \( v_j \) to unknown public outputs.}  
  Suppose the adversary somehow obtains \( m \) true session values \( \{v_1, \dots, v_m\} \), while observing \( V \geq m \) public pairs \( (s_1^{(i)}, s_3^{(i)}) \) from the same session \( (z, u) \). The attacker’s goal is to identify which of the \( V \) outputs correspond to the known values. This reduces to searching over all injective mappings from the known values \( v_j \) to distinct observed outputs. The number of such mappings is:
  \[
  \#\text{matchings} = \frac{V!}{(V - m)!},
  \]
  which becomes prohibitively large for even small \( m \). For example, with \( V = 100 \), \( m = 5 \), there are over \( 9 \cdot 10^9 \) possible mappings. Without knowledge of the internal structure of the masked function, all mappings are equiprobable. Moreover, incorrect matchings violate the invariant identity, producing inconsistent values of \( v^* \) that will fail the hash verification step. No adaptive feedback is available to guide the attacker toward a correct match. Thus, even partial leakage of \( v_j \) values does not assist the adversary unless the exact mapping to outputs is revealed.

  \item \textbf{Algebraic hardness.} Even assuming \( m \) correct values are known, solving the resulting system involves nonlinear combinations of modular exponentials, rational inverses, and pseudorandom oscillators evaluated at unknown rational shifts. The masking via PRFs and antiperiodic oscillators renders symbolic simplification or Gröbner-style algebraic methods infeasible. Moreover, any small noise or error in \( s_i^{(j)} \) (due to guessing or measurement) breaks solvability.

  \item \textbf{No symbolic reduction.} The oscillators \( \varphi(Ct), \psi(Ct) \) are not simple functions but pseudorandom bitstreams seeded from \( S \). Even with known input points \( t + \delta_i \), their values are uncomputable without access to the PRG state.

\end{itemize}

As both the probability of guessing sufficient correct \( v_i \) values and the feasibility of solving the resulting algebraic system are negligible in \( \lambda \), their conjunction is negligible as well. Therefore, the attacker gains no advantage from accumulating public observations, regardless of statistical or symbolic strategy.

\paragraph{Conclusion.}
For any PPT adversary \( \mathcal{A} \), the probability of constructing a valid forgery \( (s^*, \delta^*) \notin \{\Delta_k^{(i)}\} \) such that \( v^* := \mathrm{RecoverV}(\cdot) \) passes the hash verification for any session \( i \) is negligible in \( \lambda \). Hence, the invariant remains unforgeable even with repeated evaluations over a fixed \( (z, u) \).

\end{proof}

\begin{remark}
The same security argument applies when the adversary accumulates multiple invariant evaluations with fixed \( z \) but varying internal parameters \( u, v \). Since all values are tied to the same hidden function \( s_z(t) \), and the invariant structure remains rigid and non-interpolable, no forgery advantage is gained beyond the case analyzed in Lemma 2.
\end{remark}

\section{On the Hardness of the Invariant Index-Hiding Problem}

We now present a formal justification for the cryptographic hardness of the Invariant Index-Hiding Problem (IIHP). Rather than assuming IIHP to be a standalone primitive, we derive its intractability from the structural properties of the masked function \( s(t) \), the rigidity of the invariant relation, and the unpredictability of index-dependent values under pseudorandom masking.

\begin{theorem}[Structural Unforgeability under Invariant Constraints]
Let \( s(t) \in \mathbb{Z}_M \) be the masked function defined by the invariant-based scheme, with all internal parameters derived from a secret seed \( S \). Then, for any probabilistic polynomial-time adversary \(\mathcal{A}\), the probability of producing a forgery \( (s^*, \delta^*) \) such that:
\begin{enumerate}
  \item \( \delta^* \notin \{2v + 1,\ 2u + 2v + 1\} \),
  \item \( v^* := \mathrm{RecoverV}(s_1, s^*, u, z) \) is valid,
  \item \( H(S, v^*, s_1, s^*, u, z) = H(S, v, s_1, s_3, u, z) \),
\end{enumerate}
is negligible in the security parameter \( \lambda \), even when the full transcript \( (s_1, s_3, u, z, H_{\mathrm{check}}) \) is revealed.
\end{theorem}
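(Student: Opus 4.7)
The plan is to treat the theorem as a consolidation of Lemmas 1 and 2, proved via a standard sequence-of-games reduction that isolates the two independent routes to a successful forgery. The adversary's view is exactly the transcript $(s_1, s_3, u, z, H_{\mathrm{check}})$, and its winning condition couples an algebraic constraint (the validity of $v^*$ under $\mathrm{RecoverV}$) with a cryptographic constraint (the hash check). I would handle these layers separately so that each transition contributes a negligible loss.

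First, I would move to an idealized experiment in three hops. In $G_0$ we have the real IIHP game. In $G_1$, replace $H$ by a random oracle; the gap is bounded by the collision- and preimage-resistance assumptions on $H$. In $G_2$, replace the session-derived quantities $p$, $K$, $C$, $i$ (hence $t$), the fractional-exponent mask $\mathrm{PRF}(i,K)$, the amplitudes $q_1,\dots,q_4$, and the oscillator sequences $\varphi,\psi$ by uniformly random outputs of the appropriate type; the gap is bounded by the PRG/PRF security assumptions invoked in the protocol setup. The remaining advantage of $\mathcal{A}$ in $G_2$ is what must be shown negligible.

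In $G_2$, I would split the winning event into two disjoint cases. Case (A), \emph{hash luck}: the adversary submits some $s^* \in \mathbb{Z}_M$ whose induced $v^* = \mathrm{RecoverV}(s_1,s^*,u,z)$ happens to yield $H(S, v^*, s_1, s^*, u, z) = H_{\mathrm{check}}$. Since the random oracle is queried at most $\mathrm{poly}(\lambda)$ times and the target $H_{\mathrm{check}}$ is uniform over an $\Omega(\lambda)$-bit range independent of the adversary's queries, a union bound gives success probability $\mathrm{poly}(\lambda)\cdot 2^{-\lambda}$. Case (B), \emph{structural hit}: the value $s^*$ coincides with a genuine evaluation $s(t+\delta^*)$ whose recovered parameter equals the real $v$. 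By the condition $\delta^* \notin \{2v+1,\,2u+2v+1\}$, this would force an agreement on an unseen evaluation point, where in $G_2$ the oscillator contributions $\varphi(C(t+\delta^*))$ and $\psi(C(t+\delta^*))$ are uniform and independent of the transcript; hence the probability of such a match is $O(1/M) \leq 2^{-\lambda}$. Both cases are directly covered by the arguments of Lemmas 1 and 2, which I would cite to handle, respectively, the single-transcript regime and the bounded-reuse regime, with the Remark extending to varying internal $(u,v)$ under fixed $z$.

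The main obstacle is the adaptive coupling inside the hash check: since $v^*$ is a known rational function of $s^*$ modulo $M$, the adversary faces the constraint $H(S,\varphi(s^*),s_1,s^*,u,z)=H_{\mathrm{check}}$ where $\varphi(\cdot)$ is an algebraic map built from $s_0, s_2, p^{2u}, t$. I have to show that this algebraic structure does not let $\mathcal{A}$ narrow the preimage search, which relies crucially on the fact that $s_0$, $s_2$, $p^{2u}$, and $t$ are functions of the secret seed $S$ hidden inside $H$'s first argument and unobservable in $G_2$. Thus any algebraic manipulation of $s^*$ is information-theoretically decoupled from the hash output, and the preimage bound from Case (A) applies. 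Summing the negligible terms across $G_0 \to G_1 \to G_2$ and both cases yields $\mathsf{Adv}_{\mathsf{IIHP}}^{\mathcal{A}}(\lambda) \leq \mathrm{negl}(\lambda)$, as claimed.
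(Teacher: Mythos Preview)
Your game-hopping framework ($G_0 \to G_1 \to G_2$) is a perfectly reasonable scaffold and is, if anything, more formally hygienic than what the paper does. The gap is in your Case~(B). You define the ``structural hit'' as the event that $s^*$ \emph{coincides with a genuine evaluation} $s(t+\delta^*)$ and then bound it via the randomness of the oscillator values at the unseen point. But nothing in the game forces $s^*$ to be of the form $s(t+\delta^*)$; the adversary simply outputs an element of $\mathbb{Z}_M$. So your Case~(A)/(B) split is not exhaustive: you have not handled the event ``$s^*$ is arbitrary but $v^*=\mathrm{RecoverV}(s_1,s^*,u,z)$ happens to equal the true $v$'', which is precisely the case that matters. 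Oscillator randomness at $t+\delta^*$ is not the right tool here, because the recovery map does not evaluate $s(\cdot)$ at $t+\delta^*$ at all.

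The paper closes this gap with a purely algebraic lemma you do not invoke: the map $s^*\mapsto v^*$ is a M\"obius (linear--fractional) function over $\mathbb{Z}_M$, so the equation $v^*=v$ has \emph{exactly one} solution, namely $s^*=s_3$. That uniqueness lemma is the hinge of the paper's argument; once you have it, Case~(B) collapses to ``$s^*=s_3$'' (which is excluded, up to the $\delta^*$ bookkeeping), and everything else is your Case~(A). Note also that the Lemmas~1 and~2 you cite are the earlier session-regime results (unique parameters / bounded reuse); the paper does not reuse those here but instead proves three \emph{new} lemmas---algebraic uniqueness of the invariant completion, hash invalidity for $s^*\neq s_3$, and infeasibility of reconstructing $s_3$ without the internal state---and combines them. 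Your last paragraph (the ``adaptive coupling'' discussion) is close in spirit to the paper's third lemma, but without the uniqueness step your decomposition does not actually cover the winning event.
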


\subsection{Proof Strategy}

To prove this result, we analyze the two fundamental constraints any successful forgery must satisfy:
\begin{itemize}
  \item the algebraic constraint: the value \( s^* \) must satisfy the invariant equation with the same base index \( t \),
  \item the cryptographic constraint: the hash binding condition must be satisfied exactly.
\end{itemize}

We show that the set of \( s^* \in \mathbb{Z}_M \) that satisfy both constraints is, with overwhelming probability, a singleton — namely, the true value \( s_3 \). Thus, no other value \( s^* \neq s_3 \) can satisfy the invariant structure and the hash check simultaneously.

We divide the proof into the following lemmata:
\begin{itemize}
  \item Lemma 1: For fixed session parameters \( (z, u, v) \), the invariant equation has at most one solution \( s^* = s_3 \) consistent with a valid recovered value \( v^* \).
  \item Lemma 2: Any \( s^* \neq s_3 \) leads to \( v^* \neq v \), which invalidates the hash check.
  \item Lemma 3: For any PPT adversary, the probability of constructing such an \( s^* \) without knowing \( t \), \( p \), and oscillator values is negligible.
\end{itemize}

Each lemma formalizes an essential rigidity in the scheme. Together, they prove that IIHP is hard under standard cryptographic assumptions.

\subsection{Lemma 1: Uniqueness of Valid Invariant Completion}

\begin{lemma}[Uniqueness of Valid Invariant Completion]
Let the session parameters \( (z, u, v) \) and the values \( s_0 = s(t) \), \( s_1 = s(t + 2v + 1) \), and \( s_2 = s(t + 2u) \) be fixed. Then the value \( s_3 := s(t + 2u + 2v + 1) \) is the \emph{only} element of \( \mathbb{Z}_M \) such that the invariant recovery formula
\[
v^* := \frac{ -s_0 p^{2u} t - s_1 p^{2u}(t+1) + s_2(t+2u) + s^*(t+2u+1) }{ 2(s_1 p^{2u} - s^*) } \mod M
\]
returns the correct session value \( v^* = v \).
\end{lemma}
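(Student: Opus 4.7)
The plan is to treat the recovery formula, once we impose $v^{*}=v$, as a linear equation in the single unknown $s^{*}$ and to show that it has a unique solution over $\mathbb{Z}_{M}$, which must therefore coincide with $s_{3}$. First, I would clear the denominator $2(s_{1}p^{2u}-s^{*})$ by cross-multiplication to obtain
\[
2v(s_{1}p^{2u}-s^{*}) \equiv -s_{0}p^{2u}t - s_{1}p^{2u}(t+1) + s_{2}(t+2u) + s^{*}(t+2u+1) \pmod{M}.
\]
This step is valid precisely under the invertibility condition $D=2(s_{1}p^{2u}-s^{*})\not\equiv 0\pmod{M}$ that the protocol already checks, and the resulting equation is genuinely linear in $s^{*}$, since $s^{*}$ appears only to first order on both sides.

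Next, I would gather the $s^{*}$ terms and rearrange, producing
\[
s^{*}\cdot(t+2u+2v+1) \equiv s_{0}p^{2u}t + s_{1}p^{2u}(t+1) + 2v\,s_{1}p^{2u} - s_{2}(t+2u) \pmod{M}.
\]
The crucial observation is that the coefficient of $s^{*}$ on the left is exactly $t_{3}:=t+2u+2v+1$, namely the evaluation index of $s_{3}$ in the generating function $s_{M}$. Since $s_{M}(t_{i})=(p^{t_{i}}+q\,\varphi(Ct_{i})+q'\,\psi(Ct_{i}))/t_{i}$ divides by its argument, the well-definedness of $s_{3}$ itself already forces $t_{3}\in\mathbb{Z}_{M}^{\times}$; this is guaranteed by the setup conditions ($\gcd(K,M)=1$ together with the canonical embedding of rational indices specified after Alice's generation). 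Hence the coefficient is invertible modulo $M$, and the linear equation admits a unique solution $s^{*}\in\mathbb{Z}_{M}$.

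To close, I would verify that $s_{3}$ itself realizes this unique solution. By the generalized discrete invariant identity established earlier in the paper, the legitimate quadruple $(s_{0},s_{1},s_{2},s_{3})$ satisfies $I_{M}(t;u,v)\equiv 1/p^{2u}\pmod{M}$, and the recovery formula was obtained precisely by algebraically inverting this identity to solve for $v$; hence substituting $s^{*}:=s_{3}$ returns $v^{*}=v$ by construction. By uniqueness, $s_{3}$ is therefore the only element of $\mathbb{Z}_{M}$ with this property. I do not expect a genuinely hard step here: the only real subtlety is keeping sign bookkeeping clean during the cross-multiplication so that the coefficient of $s^{*}$ is cleanly identified with $t_{3}$, after which the protocol's own well-definedness preconditions (invertibility of $t_{3}$ and of $D$) finish the argument.
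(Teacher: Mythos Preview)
Your proposal is correct and follows the same overall strategy as the paper: cross-multiply to obtain a linear congruence in the unknown $s^{*}$, argue that the leading coefficient is a unit in $\mathbb{Z}_{M}$, and conclude uniqueness. The one substantive difference is in how the invertibility of that coefficient is justified. The paper writes the recovery map as $f(s^{*})=(A+Bs^{*})/(C-s^{*})$ with $B=t+2u+1$, sets $f(s^{*})=v$, obtains $s^{*}(B+v)\equiv vC-A$, and then asserts that Alice's check $D=2(s_{1}p^{2u}-s_{3})\not\equiv 0$ guarantees $B+v\not\equiv 0$; but this implication is not made transparent, and the paper's bookkeeping silently drops the factor of $2$ in the denominator (the correct coefficient is $B+2v$, not $B+v$). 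You instead track the factor of $2$ cleanly, identify the coefficient of $s^{*}$ exactly as $t_{3}=t+2u+2v+1$, and then invoke the protocol's own requirement that $t_{3}\in\mathbb{Z}_{M}^{\times}$ for $s_{M}(t_{3})$ to be defined. That is a tighter and more self-contained justification than the paper's, and it sidesteps the question of whether the $D$-check really forces the coefficient to be a unit.
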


\begin{proof}
Fix all values \( (z, u, v) \), and consider the equation defining \( v^* \) as a function of the variable \( s^* \in \mathbb{Z}_M \). All other quantities — \( s_0, s_1, s_2, t, p^{2u} \) — are fixed within the session.

We rewrite the expression as a rational function:
\[
f(s^*) = \frac{A + B s^*}{C - s^*} \mod M,
\]
where:
\begin{itemize}
  \item \( A := -s_0 p^{2u} t - s_1 p^{2u}(t + 1) + s_2(t + 2u) \),
  \item \( B := t + 2u + 1 \),
  \item \( C := s_1 p^{2u} \).
\end{itemize}

The equation \( f(s^*) = v \mod M \) is equivalent to the modular congruence:
\[
A + B s^* \equiv v(C - s^*) \mod M.
\]

Bringing all terms to one side and grouping by \( s^* \), we obtain a linear congruence:
\[
s^*(B + v) \equiv vC - A \mod M.
\]

This congruence has a unique solution modulo \( M \) provided \( B + v \not\equiv 0 \mod M \). However, in the actual protocol, the value \( v \) is chosen such that the denominator of the recovery formula,
\[
2(s_1 p^{2u} - s_3) \equiv 2(C - s^*) \mod M,
\]
is invertible. Alice checks this condition explicitly and aborts otherwise. Hence, the scheme guarantees that \( B + v \not\equiv 0 \mod M \), and the congruence admits a unique solution for \( s^* \in \mathbb{Z}_M \).

Hence, there exists a unique value \( s^* \in \mathbb{Z}_M \) such that \( v^* = v \), and this value is precisely \( s_3 = s(t + 2u + 2v + 1) \).

\end{proof}

\paragraph{Discussion.}
The significance of Lemma 1 lies in the uniqueness of the value \( s_3 \) that leads to correct recovery of \( v \) under the invariant. Since Bob computes \( v \) deterministically from \( (s_1, s_3, u, z) \), any forgery attempt must produce a fake value \( s^* \) that satisfies \( \mathrm{RecoverV}(s_1, s^*, u, z) = v \), or else the hash will not verify.

However, the attacker cannot compute such an \( s^* \) without full knowledge of the internal parameters used to define the masked function \( s(t) \): namely, the hidden index \( t \), the exponent \( p^{2u} \), the oscillator amplitudes \( q_i \), and the values of the PRF-masked components. Inverting the recovery equation without this information is equivalent to breaking the entire function structure.

Conversely, even if the attacker wishes to embed an arbitrary value \( v^{\mathrm{target}} \) into the protocol, they would have to compute a matching \( s^* \) such that \( \mathrm{RecoverV}(s_1, s^*, u, z) = v^{\mathrm{target}} \), which again requires complete control over the function. In both directions — faking \( s^* \) for a known \( v \), or embedding a chosen \( v \) — the forgery implies full compromise of the invariant construction. The scheme thus resists tampering unless its entire internal state is revealed.

\subsection{Lemma 2: Invalid Hash for Incorrect Invariant Completion}

\begin{lemma}[Invalidity of Hash under Forged Evaluation]
Let the session parameters \( (z, u, v) \) and values \( s_1, s_3 \) be defined as in the protocol. Let \( H_{\mathrm{check}} := H(S, v, s_1, s_3, u, z) \). Then for any \( s^* \in \mathbb{Z}_M \) such that \( s^* \neq s_3 \), the recovered value \( v^* := \mathrm{RecoverV}(s_1, s^*, u, z) \) satisfies \( v^* \neq v \), and the forged tuple fails the hash verification:
\[
H(S, v^*, s_1, s^*, u, z) \neq H_{\mathrm{check}}.
\]
\end{lemma}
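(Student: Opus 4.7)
The plan is to chain Lemma 1 with the collision resistance of the verification hash $H$. First, I would invoke Lemma 1 in its contrapositive form: it establishes that $s_3$ is the \emph{unique} element of $\mathbb{Z}_M$ for which the recovery map $\mathrm{RecoverV}(s_1, \cdot, u, z)$ returns the legitimate session value $v$. Consequently, any candidate $s^{*} \neq s_3$ must produce a recovered $v^{*} := \mathrm{RecoverV}(s_1, s^{*}, u, z)$ with $v^{*} \neq v$. This step requires no new algebra: it is a direct consequence of the uniqueness of the solution to the linear congruence $s^{*}(B + v) \equiv vC - A \pmod{M}$ analyzed in the previous lemma, whose solvability is guaranteed by Alice's explicit abort check on the invertibility of $2(s_1 p^{2u} - s^{*})$.

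Next, I would compare the hash arguments coordinate by coordinate. The committed value is $H_{\mathrm{check}} = H(S, v, s_1, s_3, u, z)$, while the forgery produces the challenge hash $H(S, v^{*}, s_1, s^{*}, u, z)$. The coordinates $S$, $s_1$, $u$, and $z$ coincide in both tuples, but by hypothesis $s^{*} \neq s_3$, and by the previous step also $v^{*} \neq v$. The two input tuples therefore differ in at least two positions, so any equality of their images under $H$ would constitute an explicit collision on distinct inputs.

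The final step is to invoke the standing cryptographic assumption on $H$, namely collision resistance or — as in the surrounding security analysis in Section~10 — the random-oracle model. Under either model, the probability that $H$ maps two distinct tuples to the same digest is bounded by $\mathrm{negl}(\lambda)$, so the hash verification equation $H(S, v^{*}, s_1, s^{*}, u, z) = H_{\mathrm{check}}$ fails except with negligible probability.

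The main obstacle is interpretive rather than computational: the lemma as stated asserts a strict inequality, yet hash collisions can only be excluded up to negligible probability, and one must also rule out degenerate configurations (for example, a choice of $s^{*}$ that makes the denominator $2(s_1 p^{2u} - s^{*})$ non-invertible, in which case $\mathrm{RecoverV}$ is undefined rather than merely incorrect). I would handle the latter by declaring that any such $s^{*}$ causes verification to fail vacuously, and I would resolve the former by either rephrasing the conclusion as holding with overwhelming probability in $\lambda$, or by adopting the idealized convention already used in the paper, under which distinct inputs to $H$ yield independent uniform outputs and equality occurs only with probability $2^{-|H|}$. With either convention, the implication chain ``Lemma 1 $\Rightarrow v^{*} \neq v \Rightarrow$ distinct hash inputs $\Rightarrow$ distinct hash outputs'' becomes immediate, and the lemma follows.
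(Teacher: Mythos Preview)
Your proposal is correct and follows essentially the same argument as the paper: invoke Lemma~1 to conclude $v^{*}\neq v$ whenever $s^{*}\neq s_3$, observe that the two hash inputs then differ (in the $v$- and $s_3$-slots), and appeal to collision resistance or the random-oracle model to rule out a matching digest. Your additional remarks on the gap between a strict inequality and a negligible-probability bound, and on the degenerate non-invertible-denominator case, are more careful than the paper's own proof, which handles both points only implicitly.
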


\begin{proof}
By Lemma 1, for fixed session parameters \( (z, u, v) \), the invariant recovery equation yields \( v^* = v \) if and only if \( s^* = s_3 \). Therefore, any forged value \( s^* \neq s_3 \) leads to a computed \( v^* \neq v \).

Now suppose an adversary nonetheless manages to construct such an \( s^* \). There are two possibilities:
\begin{itemize}
  \item Either the attacker has found a value \( s^* \) such that \( \mathrm{RecoverV}(s_1, s^*, u, z) = v \). But by Lemma 1, this can only happen if \( s^* = s_3 \), so this case is ruled out.
  \item Or the attacker accepts that \( v^* \neq v \), and hopes to satisfy:
  \[
  H(S, v^*, s_1, s^*, u, z) = H(S, v, s_1, s_3, u, z).
  \]
  But now the attacker must find a collision in the hash function involving different inputs — i.e., to find
  \[
  (v^*, s^*) \neq (v, s_3)
  \]
  such that the hash still matches. This amounts to a chosen-input collision, which is assumed to be computationally infeasible under standard cryptographic assumptions (or negligible under the random oracle~\cite{bellare1993random} model).
\end{itemize}

Thus, even if the adversary could manipulate the invariant to produce a valid-looking \( v^* \), the hash function ensures that only the authentic combination \( (v, s_3) \) is accepted. In effect, the hash function transforms a rare success (in producing \( v^* = v \)) into an even rarer success (collision of full tuple hashes), acting as a final binding gate in the protocol.

\end{proof}

\paragraph{Discussion.}
Lemma 2 illustrates the layered defense provided by the scheme. Even if an attacker could bypass the invariant constraint and produce a forged evaluation \( s^* \) that leads to a plausible recovered value \( v^* \), the hash function enforces a strict commitment to the original tuple. The hash binding ensures that not only must the recovered \( v^* \) match the original \( v \), but also that the entire evaluation tuple \( (v, s_1, s_3, u, z) \) is preserved exactly. 

Therefore, any deviation from the true \( s_3 \) results in either a mismatch in \( v \), or a collision in the hash function — both of which are cryptographically infeasible. The scheme relies on this double barrier: algebraic rigidity from the invariant, and cryptographic binding from the hash. Forging both simultaneously implies full compromise of internal state, which contradicts the assumed hardness of IIHP.

\subsection{Lemma 3: Infeasibility of Forgery Without Internal Knowledge}

\begin{lemma}[No Forgery Without Structural Knowledge]
Let the masked function \( s(t) \) be defined as in the protocol, with secret seed \( S \) and session nonce \( z \). Then for any probabilistic polynomial-time adversary with access only to the public session transcript \( (s_1, s_3, u, z, H_{\mathrm{check}}) \), the probability of generating a value \( s^* \in \mathbb{Z}_M \), such that:
\begin{enumerate}
  \item \( s^* \neq s_3 \),
  \item \( v^* := \mathrm{RecoverV}(s_1, s^*, u, z) \) is computable (or \( \mathrm{RV(s^*)} \) for shorter notation),
  \item \( H(S, v^*, s_1, s^*, u, z) = H_{\mathrm{check}} \),
\end{enumerate}
is negligible in the security parameter \( \lambda \), unless the adversary can recover the internal parameters of the function \( s(t) \).
\end{lemma}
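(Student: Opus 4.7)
The plan is to reduce any forgery event to two mutually exclusive sub-events and bound each separately. Let $E$ denote the adversary's success event: the output of a pair $(s^*, \delta^*)$ with $s^* \neq s_3$ such that $v^* := \mathrm{RecoverV}(s_1, s^*, u, z)$ yields $H(S, v^*, s_1, s^*, u, z) = H_{\mathrm{check}}$. I split $E$ according to whether the recovered value coincides with the honest one: $E_1 := E \cap \{v^* = v\}$ and $E_2 := E \cap \{v^* \neq v\}$. Lemma 1 immediately closes $E_1$: the recovery map $s^* \mapsto v^*$ is a M\"obius-type rational function in $s^*$ whose fiber over $v$ is the singleton $\{s_3\}$, which contradicts $s^* \neq s_3$. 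Hence $\Pr[E] = \Pr[E_2]$.

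Next I bound $\Pr[E_2]$ by reducing to the unpredictability of $H$ under the hidden keying input $S$. On $E_2$ the adversary produces a tuple $(v^*, s^*) \neq (v, s_3)$ whose hash under the secret $S$ collides with $H_{\mathrm{check}}$. Modeling $H$ as a random oracle (or invoking target-collision resistance together with the secrecy of $S$), each fresh candidate produces an output that, conditioned on the adversary's view, is independent and uniform over the hash range. A union bound over the $\mathrm{poly}(\lambda)$ candidates a PPT attacker can form gives $\Pr[E_2] \leq \mathrm{poly}(\lambda) \cdot 2^{-\lambda} = \mathrm{negl}(\lambda)$.

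The main obstacle is ruling out indirect attacks that extract partial information about the internal state $(t, p, q_1, \dots, q_4, \varphi, \psi)$ from the transcript and then bypass the two-case reduction. To handle this I would run a hybrid argument on the transcript distribution: successively replace each seed-derived component (the hash-derived parameters $p, K, C, B, i$, the amplitudes $q_k$, the PRG outputs $\varphi, \psi$, and the PRF-masked factor $\mathrm{PRF}(i, K)$) by an independent uniform sample, invoking in each step the PRF/PRG security of the underlying primitive. In the final hybrid, $(s_1, s_3)$ is uniform in $\mathbb{Z}_M^2$ and $H_{\mathrm{check}}$ is an independent random oracle output under an unknown key $S$; in that idealized world the bound from the previous paragraph applies unconditionally, and the chain of hybrids incurs only a negligible distinguishing loss.

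Combining the contributions, $\Pr[E] \leq \Pr[E_1] + \Pr[E_2] \leq 0 + \mathrm{negl}(\lambda)$, which is the statement of the lemma; the alternative branch \emph{unless the adversary recovers the internal parameters} is precisely the failure of one of the hybrid steps, i.e., breaking a PRF/PRG instantiated from $S$. The delicate point to verify carefully in the writeup is that the recovery map used to define $v^*$ in the attacker's success condition is the one Bob actually computes from quantities regenerable from $(S, z)$, so that the randomization in the hybrids does not leak additional structure to the adversary through the definition of $v^*$ itself.
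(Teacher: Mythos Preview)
Your proposal is correct and shares the same two-constraint skeleton as the paper's proof---an algebraic consistency requirement and a hash-binding requirement---but you organize it more tightly. The paper argues informally that (i) producing a consistent $s^*$ requires the hidden internal parameters $(t, p, q_i, \varphi, \psi)$, all of which are PRF/PRG outputs under $S$, and (ii) matching $H_{\mathrm{check}}$ without $S$ is a preimage or chosen-input collision; it does not explicitly partition on whether $v^* = v$, nor does it formalize the ``unknown to the adversary'' claim beyond assertion. Your explicit split $E = E_1 \cup E_2$, with $E_1$ killed by Lemma~1, is cleaner (and in effect re-derives the paper's Lemma~2 inside this proof), and your hybrid argument---successively replacing each seed-derived component by an independent uniform sample---is a genuine addition that the paper omits: it is exactly what is needed to convert the paper's heuristic into an actual reduction to the security of the underlying PRFs/PRGs and the random oracle. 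The caveat you flag at the end, that $\mathrm{RecoverV}$ is itself defined via $S$-derived quantities and so the winning predicate shifts across hybrids, is real and is the one place your sketch would need careful bookkeeping; the paper sidesteps this entirely by staying at the heuristic level.
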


\begin{proof}
Assume the adversary is given only the public values \( (s_1, s_3, u, z, H_{\mathrm{check}}) \), and has no access to the secret seed \( S \). To construct a valid forgery \( s^* \), the adversary must simultaneously satisfy two independent constraints:

\begin{itemize}
  \item \textbf{Algebraic constraint:} The forged value \( s^* \) must yield a recovered \( v^* := \mathrm{RV}(s^*) \) that satisfies the invariant recovery equation. However, this recovery depends on the internal structure of the masked function \( s(t) \), including:
  \begin{itemize}
    \item the hidden rational index \( t \in \mathbb{Z}_M \),
    \item the masked exponent \( p^t := p^{\lfloor t \rfloor} \cdot \mathrm{PRF}(i, K) \),
    \item the oscillator functions \( \varphi, \psi \colon \mathbb{Q} \to \mathbb{Z}_M \), which are generated by cryptographic pseudorandom generators (PRGs) seeded with \( (S, z) \),
    \item the amplitudes \( q_i \), derived via keyed hashes.
  \end{itemize}
  These components are unknown to the adversary, and any attempt to compute or predict \( s(t + \delta) \) for arbitrary shifts \( \delta \) without this knowledge is equivalent to inverting or predicting outputs of secure PRGs and PRFs — a task assumed to be infeasible for PPT adversaries.

  \item \textbf{Hash constraint:} Even if the adversary guesses a value \( s^* \) that yields some \( v^* \), the hash binding still requires:
  \[
  H(S, v^*, s_1, s^*, u, z) = H_{\mathrm{check}}.
  \]
  Without access to the seed \( S \), and without being able to produce the exact committed tuple, the adversary faces either a preimage problem (to find a value hashing to a known digest) or a chosen-input collision — both of which are cryptographically hard under standard hash function assumptions.
\end{itemize}

Therefore, the joint probability of passing both constraints — algebraic consistency and hash verification — is negligible in \( \lambda \). Any adversary capable of such forgery would, by implication, break either the pseudorandomness of the internal generators or the binding property of the hash, contradicting standard cryptographic assumptions.

\end{proof}

\subsection{Conclusion: Security of the IIHP Game}

Combining the previous results, we obtain the following:

\begin{theorem}[Security of the Invariant Index-Hiding Problem]
Let \( \lambda \in \mathbb{N} \) be the security parameter. Assume that:
\begin{itemize}
  \item The hash function \( H \) is modeled as a collision-resistant random oracle;
  \item The oscillator functions \( \varphi, \psi \) are derived from secure PRG or PRF constructions seeded from \( (S, z) \);
  \item The masked exponent \( p^t \) is derived via a pseudorandom function~\cite{goldreich1986construct};
  \item The protocol parameters \( M \), \( K \), and nonce entropy satisfy \( \log_2 M \geq \lambda \), \( K \geq 2^{\lambda/2} \), \( |S| = |z| = \lambda \).
\end{itemize}
Then for any probabilistic polynomial-time adversary \( \mathcal{A} \), the probability of producing a successful forgery in the IIHP game is negligible:
\[
\mathsf{Adv}_{\mathsf{IIHP}}^{\mathcal{A}}(\lambda) \leq \mathrm{negl}(\lambda).
\]
\end{theorem}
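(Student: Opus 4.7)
The plan is to establish the theorem by composing the three preceding lemmas via a standard game-hopping argument that reduces any IIHP adversary to either a hash collision or a distinguisher for one of the underlying pseudorandom primitives. I would start from the transcript $(s_1, s_3, u, z, H_{\mathrm{check}})$ given to $\mathcal{A}$ and analyze the event that $\mathcal{A}$ outputs $(s^*, \delta^*)$ with $\delta^* \notin \{2v+1,\ 2u+2v+1\}$ and $H(S, v^*, s_1, s^*, u, z) = H_{\mathrm{check}}$, where $v^* := \mathrm{RecoverV}(s_1, s^*, u, z)$. Splitting on whether $s^* = s_3$ or $s^* \neq s_3$ reduces the analysis to the two cases covered by Lemmas 1 and 2.

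For the case $s^* \neq s_3$, Lemma 2 guarantees $v^* \neq v$, so the forgery requires a random-oracle collision on distinct input tuples; modeling $H$ as a random oracle bounds this by $q_H \cdot 2^{-\lambda}$ over $q_H = \mathrm{poly}(\lambda)$ queries. For the case $s^* = s_3$, Lemma 1 forces $v^* = v$, but then the constraint $\delta^* \notin \{2v+1,\ 2u+2v+1\}$ requires $\mathcal{A}$ to exhibit a \emph{second} offset at which the masked function $s$ takes the value $s_3$. Predicting such an auxiliary collision without the seed is ruled out by Lemma 3 and the PRG/PRF assumptions, contributing at most $1/M + \mathrm{negl}(\lambda)$ to the total probability.

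To formalize the reduction, I would proceed via a sequence of hybrids: (i) replace $\mathrm{PRG}_\varphi$ and $\mathrm{PRG}_\psi$ by truly random antiperiodic functions on the rational grid, each swap costing $\mathrm{negl}(\lambda)$; (ii) replace $\mathrm{PRF}(i, K)$ used in masking $p^t$ by a uniform element of $\mathbb{Z}_M^\times$; (iii) model $H$ as an ideal random oracle. In the final hybrid, the pair $(s_1, s_3)$ is information-theoretically masked and the hash commits to the full tuple $(S, v, s_1, s_3, u, z)$, so a union bound over the three hybrid distances together with the two case-wise forgery bounds yields $\mathsf{Adv}_{\mathsf{IIHP}}^{\mathcal{A}}(\lambda) \leq \mathrm{negl}(\lambda)$ under the stated parameter regime $\log_2 M \geq \lambda$, $K \geq 2^{\lambda/2}$, $|S| = |z| = \lambda$.

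The main obstacle I expect is step (ii): verifying that replacing the PRF-masked exponent preserves the joint distribution of $(s_1, s_3)$ conditioned on the transcript. Because both values share the same hidden base $p^{\lfloor t \rfloor} \cdot \mathrm{PRF}(i, K)$, which cancels only after forming the invariant ratio, the distinguishing advantage must be tracked through the specific algebraic combinations exposed to $\mathcal{A}$ rather than dismissed by a naive PRF swap. Once this dependency is accounted for—using the fact that only multiplicative ratios $p^{a-b}$ appear in checkable positions while the absolute magnitude remains pseudorandomly hidden—the algebraic rigidity of Lemma 1 and the hash binding of Lemma 2 compose cleanly, leaving the quantitative bound $1/M + q_H \cdot 2^{-\lambda} + \varepsilon_{\mathrm{PRF}} + 2\varepsilon_{\mathrm{PRG}}$, which is negligible in $\lambda$.
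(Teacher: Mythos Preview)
Your proposal tracks the paper's proof sketch closely: both argue by composing Lemmas~1--3, with Lemma~1 pinning down the unique valid completion, Lemma~2 turning any $s^* \neq s_3$ into a hash mismatch, and Lemma~3 bounding the adversary's ability to simulate internal structure. Your explicit case split, hybrid sequence, and quantitative bound $1/M + q_H \cdot 2^{-\lambda} + \varepsilon_{\mathrm{PRF}} + 2\varepsilon_{\mathrm{PRG}}$ go well beyond the paper, which offers only a one-paragraph sketch citing the three lemmas without any game-hopping or concrete accounting.

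However, your treatment of the case $s^* = s_3$ contains a real gap. You claim that in this branch ``the constraint $\delta^* \notin \{2v+1,\ 2u+2v+1\}$ requires $\mathcal{A}$ to exhibit a \emph{second} offset at which the masked function $s$ takes the value $s_3$,'' but the strict IIHP winning condition imposes no relation between $s^*$ and $\delta^*$: the integer $\delta^*$ enters only the freshness clause, and nowhere is it required that $s^* = s(t+\delta^*)$. An adversary who simply replays $(s^*, \delta^*) = (s_3, 0)$ satisfies the hash check exactly (since then $v^* = v$) and, because $v \in \mathbb{N}$ forces $2v+1 \geq 1$ and $2u+2v+1 \geq 1$, also satisfies $\delta^* \notin \{\Delta_1, \Delta_3\}$ deterministically. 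Lemma~3 does not block this, as it is stated only for $s^* \neq s_3$. The paper's sketch never confronts this branch at all, so the defect is inherited from the game formulation rather than introduced by you; but your patch silently imports a semantic coupling of $\delta^*$ to the function $s$ that the strict game does not contain, and the $1/M + \mathrm{negl}(\lambda)$ bound you attribute to this case does not follow from the stated lemmas.
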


\begin{proof}[Proof Sketch]
The theorem follows by reduction to the previously proven lemmas:
\begin{itemize}
  \item Lemma 1 ensures that only a single value \( s_3 \) yields the correct \( v \) under the recovery formula;
  \item Lemma 2 shows that any deviation from \( s_3 \) results in a different \( v^* \), which fails the hash check;
  \item Lemma 3 proves that no adversary can forge a consistent pair \( (s^*, v^*) \) without recovering the internal structure of \( s(t) \), which is infeasible under the pseudorandomness assumptions.
\end{itemize}

Therefore, the adversary’s only strategy would require either inverting the pseudorandom generators or finding a collision in \( H \), both of which are assumed hard. The scheme thus reduces the IIHP game to standard cryptographic assumptions and ensures its security under well-established hardness models.
\end{proof}

\section{Parameter Recommendations and Security Rationale}

This section defines a complete and cryptographically justified set of parameters required for the secure operation of the invariant-based scheme. Each parameter is chosen to ensure classical 128-bit security and structural robustness against post-quantum adversaries. The ranges prevent brute-force recovery of the hidden rational index, enforce unpredictability of oscillator-based masking, and ensure that all transmitted values remain strongly bound to the internal structure of the protocol through collision-resistant hashing. The recommendations are intended to support long-term cryptographic viability while remaining practical for real-world deployment.

\paragraph{Prime Modulus \( M \).}
The arithmetic modulus \( M \) defines the finite field \( \mathbb{Z}_M \) used in all evaluations. It must be a prime number of at least 256 bits. To ensure long-term and post-quantum resistance, values in the range \( 2^{256} \leq M < 2^{384} \) are recommended. The modulus should be selected to avoid structural weaknesses (e.g., special primes) and support efficient modular arithmetic.

\paragraph{Exponential Base \( p \).}
The base \( p \in \mathbb{Z}_M^\times \) is used in the masked exponential term \( p^t \). To ensure well-defined modular exponentiation and the existence of modular inverses such as \( p^{-2u} \), the value of \( p \) must satisfy \( \gcd(p, M) = 1 \). When \( M \) is a power of two, \( p \) must be odd. The value of \( p \) may be derived per session from the shared secret \( S \) and nonce \( z \) using a hash function such as \( p := H_p(S, z) \bmod M \).

\paragraph{Grid Resolution \( K \).}
The parameter \( K \in \mathbb{N} \) defines the rational evaluation grid \( \mathbb{Z} + \frac{1}{K}\mathbb{Z} \) for the hidden index \( t = \frac{i}{K} \). The value \( K \) must satisfy \( \gcd(K, M) = 1 \) and be large enough to prevent exhaustive search. Values in the range \( 2^{160} \leq K \leq 2^{256} \) provide sufficient resistance, ensuring that recovery of \( t \) requires infeasible effort.

\paragraph{Oscillator Frequency \( C \).}
The oscillator frequency \( C \in \mathbb{N} \) controls the internal periodicity of the pseudorandom oscillators \( \varphi \) and \( \psi \). To avoid short cycles and ensure decorrelation, \( C \) should lie in the range \( 2^{24} \leq C \leq 2^{32} \). It is required that \( \gcd(C, M) = 1 \) to guarantee that oscillator output sequences fully span \( \mathbb{Z}_M \) and avoid modular repetition.

\paragraph{Index Spacing Parameter \( u \).}
The public parameter \( u \in \mathbb{N} \) determines internal spacing between rational points used in the invariant. A value of 32 bits is sufficient to balance arithmetic feasibility and structural alignment.

\paragraph{Secret Offset Parameter \( v \).}
The session-specific secret offset \( v \in \mathbb{N} \) is recovered by Bob using the invariant identity. To ensure resistance to brute-force attacks, \( v \) must lie in the range \( 2^{64} \leq v \leq 2^{128} \). This ensures infeasibility of parallelized guessing even under hardware acceleration.

\paragraph{Shared Secret \( S \).}
The master key \( S \in \{0,1\}^{256} \) seeds all session parameters, PRGs, and hash inputs. Its entropy must match or exceed 256 bits to ensure full coverage of the domain space and collision resistance under standard assumptions.

\paragraph{Session Nonce \( z \).}
The nonce \( z \in \{0,1\}^{256} \) provides per-session uniqueness, domain separation, and protection against replay and cross-session correlation. Its length must be at least 256 bits to ensure statistical uniqueness across concurrent sessions in distributed systems.

\paragraph{Hash Output \( H_{\mathrm{check}} \).}
The verification hash must provide collision and preimage resistance. A 256-bit output is required. Acceptable implementations include SHA-3-256 or truncated SHAKE256.

\begin{table}[ht]
\centering
\renewcommand{\arraystretch}{1.2}
\resizebox{\textwidth}{!}{%
\begin{tabular}{|c|c|c|c|c|}
\hline
\textbf{Parameter} & \textbf{Type} & \textbf{Range} & \textbf{Bit Size} & \textbf{Security Role} \\
\hline
\( M \) & Prime modulus & \( \geq 2^{256} \) & 256\text{-}384 & Finite field for all arithmetic \\
\hline
\( K \) & Grid resolution & \( 2^{160} \text{-} 2^{256} \) & 160\text{-}256 & Hides rational index \( t \) \\
\hline
\( C \) & Oscillator frequency & \( 2^{24} \text{-} 2^{32} \) & 24\text{-}32 & Governs pseudorandom oscillators \\
\hline
\( p \) & Exponential base & \( \in \mathbb{Z}_M^\times \) & 256 & Used in masked exponentiation \\
\hline
\( u \) & Public spacing parameter & \( 2^{32} \) & 32 & Index offset \\
\hline
\( v \) & Secret session parameter & \( 2^{64} \) & 64 & Recovered via invariant equation \\
\hline
\( |S| \) & Shared secret key & 256 & 256 & Seeds all PRFs, hashes, and masks \\
\hline
\( |z| \) & Nonce & 256 & 256 & Enforces session uniqueness \\
\hline
\( H \) & Hash output & SHA-3-256 & 256 & Binds the transcript and invariant \\
\hline
\end{tabular}
}
\caption{Security parameters and recommended sizes}
\end{table}

\section{Serialized Message Format and Size Estimate}

We specify the structure and size of the data transmitted from Alice to Bob during a session. This message must contain all the information necessary for Bob to verify the invariant and recover the secret session value \( v \).

\begin{table}[ht]
\centering
\renewcommand{\arraystretch}{1.2}
\begin{tabular}{|c|c|c|p{5.2cm}|}
\hline
\textbf{Field} & \textbf{Type / Domain} & \textbf{Size (bits)} & \textbf{Purpose} \\
\hline
\( s_1 \) & \( \mathbb{Z}_M \) & 256 & Evaluation at \( t + \Delta_1 = t + 2v + 1 \) \\
\hline
\( s_3 \) & \( \mathbb{Z}_M \) & 256 & Evaluation at \( t + \Delta_3 = t + 2u + 2v + 1 \) \\
\hline
\( u \) & \( \mathbb{Z} \) & 32 & Public index spacing parameter \\
\hline
\( z \) & \( \{0,1\}^{256} \) & 256 & Session nonce (prevents replay) \\
\hline
\( H_{\mathrm{check}} \) & \( \{0,1\}^{256} \) & 256 & Hash binding the invariant and session \\
\hline
\textbf{Total} & — & \textbf{1056 bits} & 132 bytes \\
\hline
\end{tabular}
\caption{Serialized message format: fields and size estimate}
\end{table}

\begin{itemize}
  \item All fields are encoded in big-endian format.
  \item Fields \( s_1 \), \( s_3 \), and \( u \) are interpreted modulo \( M \).
  \item The hash \( H_{\mathrm{check}} \) binds all transmitted values and must be validated prior to any invariant processing.
  \item The 256-bit nonce \( z \) ensures session unlinkability and uniqueness.
\end{itemize}

The serialized message contains all values necessary for verifying the invariant and recovering the session-specific secret \( v \), with a total size of exactly 1056 bits (132 bytes). This compact format is comparable to modern digital signature schemes and remains efficient for use in bandwidth-constrained or embedded environments.

\section*{Conclusion and Future Directions}

We have introduced a symmetric cryptographic scheme that transmits a hidden session-dependent value \( v \) through a verifiable invariant structure. The construction combines algebraic alignment, pseudorandom masking, and deterministic recovery to enforce structural integrity without revealing intermediate components or the internal evaluation point \( t \).

The scheme is built on three core principles:
\begin{itemize}
  \item \textbf{Structure} — values are bound by an invariant identity with rational index spacing;
  \item \textbf{Control} — the receiver deterministically recovers \( v \) from partial data, using no external oracle;
  \item \textbf{Masking} — all evaluations are obfuscated using PRF-based oscillators, ensuring that outputs are unlinkable and unpredictable.
\end{itemize}

\paragraph{Comparison to classical primitives.}
Unlike standard MACs or hash-based signatures, this scheme does not merely bind values to a shared secret — it binds them through a structured, algebraic law. The transmitted values reveal neither the secret nor the internal state. Verification proceeds via algebraic reconstruction rather than external challenge-response, enabling new modes of interaction. In this respect, the invariant plays a similar role to a signature equation or zero-knowledge relation.

\paragraph{Structural Advantages over Classical Symmetric Schemes.}
The invariant-based symmetric construction provides several capabilities not typically found in traditional MAC, AEAD, or PRF-based systems. These advantages stem from the algebraic structure of the protocol, the masking of internal state via pseudorandom oscillators, and the invariant-preserving design. Below we highlight key features that distinguish this scheme:

\begin{itemize}
  \item \textbf{Unbounded Secret Reuse:} A single 256-bit secret can be reused across unlimited sessions without degradation or risk of linear exposure.
  
  \item \textbf{Session Unlinkability under Repetition:} Even repeated transmissions of the same internal value (e.g., the same \( v \)) produce unlinkable ciphertexts due to randomized index masking.

  \item \textbf{Algebraic Binding Without Decryption:} Integrity is verified through a fixed invariant identity, eliminating the need for decryption or access to plaintext.

  \item \textbf{Forward Unpredictability:} Given partial outputs (e.g., \( s_1, s_3 \)), recovery of session parameters (such as \( v, t \)) is infeasible due to the pseudorandom structure and hidden index.

  \item \textbf{Composable and Chainable Design:} The invariant can be embedded recursively or composed across multiple layers, enabling advanced cryptographic workflows (e.g., chain-of-trust encoding).

  \item \textbf{Structure Verification as First-Class Feature:} Unlike MAC or AEAD schemes that protect only output authenticity, this protocol enforces internal structural correctness as part of the cryptographic guarantee.
\end{itemize}

\paragraph{Security assumptions.}
The scheme’s security relies on the hardness of recovering or simulating values constrained by a masked nonlinear identity over a hidden rational grid. This involves:
\begin{itemize}
  \item Secret-dependent oscillators modulated through a non-algebraic basis;
  \item Masked exponential terms with fractional exponents approximated via PRF;
  \item Absence of any known quantum algorithms for inverting this structure.
\end{itemize}
Thus, the scheme resists both classical and quantum attacks. In particular, it is believed to be secure against Shor-style discrete log attacks, as no group operation is exposed, and the core inversion problem is non-linear, non-abelian, and non-periodic.

\paragraph{Use cases and limitations.}
The scheme is suitable for scenarios where:
\begin{itemize}
  \item Confidential values must be verifiably transmitted without direct exposure;
  \item Structural integrity and tamper resistance are more important than bandwidth or throughput;
  \item Session uniqueness and unlinkability are critical (e.g., one-time credentials, commitments).
\end{itemize}

It is not optimized for high-throughput symmetric encryption or continuous data streams. The construction introduces additional computational cost due to pseudorandom oscillator evaluation and rational arithmetic, making it best suited for lightweight cryptographic exchanges, signature-like authentication, or commitment protocols.

\paragraph{Future directions.}
Next steps include:
\begin{itemize}
  \item Extending the invariant framework to allow multidimensional or multivalue bindings;
  \item Exploring families of invariants with varying verification complexity and algebraic resilience;
  \item Construction of asymmetric schemes based on analogous invariants, enabling a prover to demonstrate structural knowledge without revealing secret information.
\end{itemize}

These directions aim to expand the invariant-based paradigm into a broader cryptographic toolkit, uniting algebraic verification with pseudorandom masking for secure, verifiable communication.


\end{document}